\newtheorem{theorem}{Theorem}
\newtheorem{lemma}{Lemma}
\newtheorem{definition}{Definition}
\newtheorem{remark}{Remark}
\begin{document}

\title{Coded Sparse Matrix Multiplication}

\author{\IEEEauthorblockN{Sinong Wang$^{*}$, Jiashang Liu$^{*}$ and Ness Shroff$^{*\dag}$}
	\IEEEauthorblockA{$^{*}$Department of Electrical and Computer Engineering\\
		$^{\dag}$Department of Computer Science and Engineering\\
		The Ohio State University\\
		\{wang.7691, liu.3992 shroff.11\}@osu.edu}}

\maketitle

\begin{abstract}
	In a large-scale and distributed matrix multiplication problem $C=A^{\intercal}B$, where $C\in\mathbb{R}^{r\times t}$,  the coded computation plays an important role to effectively deal with ``stragglers'' (distributed computations that may get delayed due to few slow or faulty processors). However, existing coded schemes could destroy the significant sparsity that exists in large-scale machine learning problems, and could result in much higher computation overhead, i.e., $O(rt)$ decoding time. In this paper, we develop a new coded computation strategy, we call \emph{sparse code}, which achieves near \emph{optimal recovery threshold}, \emph{low computation overhead}, and \emph{linear decoding time} $O(nnz(C))$.  We implement our scheme and demonstrate the advantage of the approach over both uncoded and current fastest coded strategies.
	
\end{abstract}

\section{Introduction}

In this paper, we consider a distributed matrix multiplication problem, where we aim to compute $C=A^{\intercal}B$ from input matrices $A\in\mathbb{R}^{s\times r}$ and $B\in\mathbb{R}^{s\times t}$ for some integers $r,s,t$. This problem is the key building block in machine learning and signal processing problems, and has been used in a large variety of application areas including classification, regression, clustering and feature selection problems. Many such applications have large-scale datasets and massive computation tasks, which forces practitioners to adopt distributed computing frameworks such as Hadoop~\cite{dean2008mapreduce} and Spark~\cite{zaharia2010spark} to increase the learning speed. 

Classical approaches of distributed matrix multiplication rely on dividing the input matrices equally among all available worker nodes. Each worker computes a partial result, and the master node has to collect the results from all workers to output matrix $C$. As a result, a major performance bottleneck is the latency in waiting for a few slow or faulty processors -- called ``stragglers'' to finish their tasks~\cite{dean2013tail}. To alleviate this problem, current frameworks such as Hadoop deploy various straggler detection techniques and usually \emph{replicate} the straggling task on another available node.

Recently, forward error correction or coding techniques provide a more effective way to deal with the ``straggler'' in the distributed tasks~\cite{dutta2016short,lee2017speeding,tandon2017gradient,yu2017polynomial,li2018fundamental}. It creates and exploits coding redundancy in local computation to enable the matrix $C$ recoverable from the results of partial finished workers, and can therefore alleviate some straggling workers. For example,  
consider a distributed system with $3$ worker nodes, the coding scheme first splits the matrix $A$ into two submatrices, i.e., $A=[A_1,A_2]$. Then each worker computes $A_1^{\intercal}B$, $A_2^{\intercal}B$ and $(A_1+A_2)^{\intercal}B$.  The master node can compute $A^{\intercal}B$ as soon as \textbf{any} $2$ out of the $3$ workers finish, and can therefore overcome one straggler. 

In a general setting with $N$ workers, each input matrix $A$, $B$ is divided into $m$, $n$ submatrices, respectively.  The \emph{recovery threshold} is defined as the minimum number of workers that the master needs to wait for in order to compute $C$. The above MDS coded scheme is shown to achieve a recovery threshold $\Theta(N)$. An improved scheme proposed in~\cite{lee2017high} referred to as the product code, can offer a recovery threshold of $\Theta(mn)$ with high probability.  More recently, the work~\cite{yu2017polynomial} designs a type of \emph{polynomial code}. It achieves the recovery threshold of $mn$, which exactly matches the information theoretical lower bound. However, many problems in machine learning exhibit both extremely \textbf{large-scale} targeting data and a \textbf{sparse} structure, i.e., $nnz(A)\ll rs$, $nnz(B)\ll st$ and $nnz(C)\ll rt$. The key question that arises in this scenario is: \emph{is coding really an efficient way to mitigate the straggler in the distributed sparse matrix multiplication problem?}

\subsection{Motivation: Coding  Straggler}

To answer the aforementioned question, we first briefly introduce the current coded matrix multiplication schemes. In these schemes, each local worker calculates a coded version of submatrix multiplication. For example, $k$th worker of the polynomial code~\cite{yu2017polynomial} essentially calculates
\begin{equation}\label{eq:polycompute}
\underbrace{\sum\nolimits_{i=1}^{m} A_i^{\intercal} x_k^{i}}_{\tilde{A}_k^{\intercal}}\underbrace{\sum\nolimits_{j=1}^{n} B_j x_k^{jm}}_{\tilde{B}_k},
\end{equation}
where $A_i, B_j$ are the corresponding submatrics of the input matrices $A$ and $B$, respectively, and $x_k$ is a given integer. One can observe that, if $A$ and $B$ are sparse, due to the matrices additions, the density of the coded matrices $\tilde{A}_k$ and $\tilde{B}_k$ will increase at most $m$ and $n$ times, respectively. Therefore, the time of matrix multiplication $\tilde{A}_k^{\intercal}\tilde{B}_k$ will increase roughly $O(mn)$ times of the simple uncoded one $A_i^{\intercal}B_j$.

\begin{figure}[t]
	\vskip -0.05in
	\begin{center}
		\centerline{\includegraphics[width=5in]{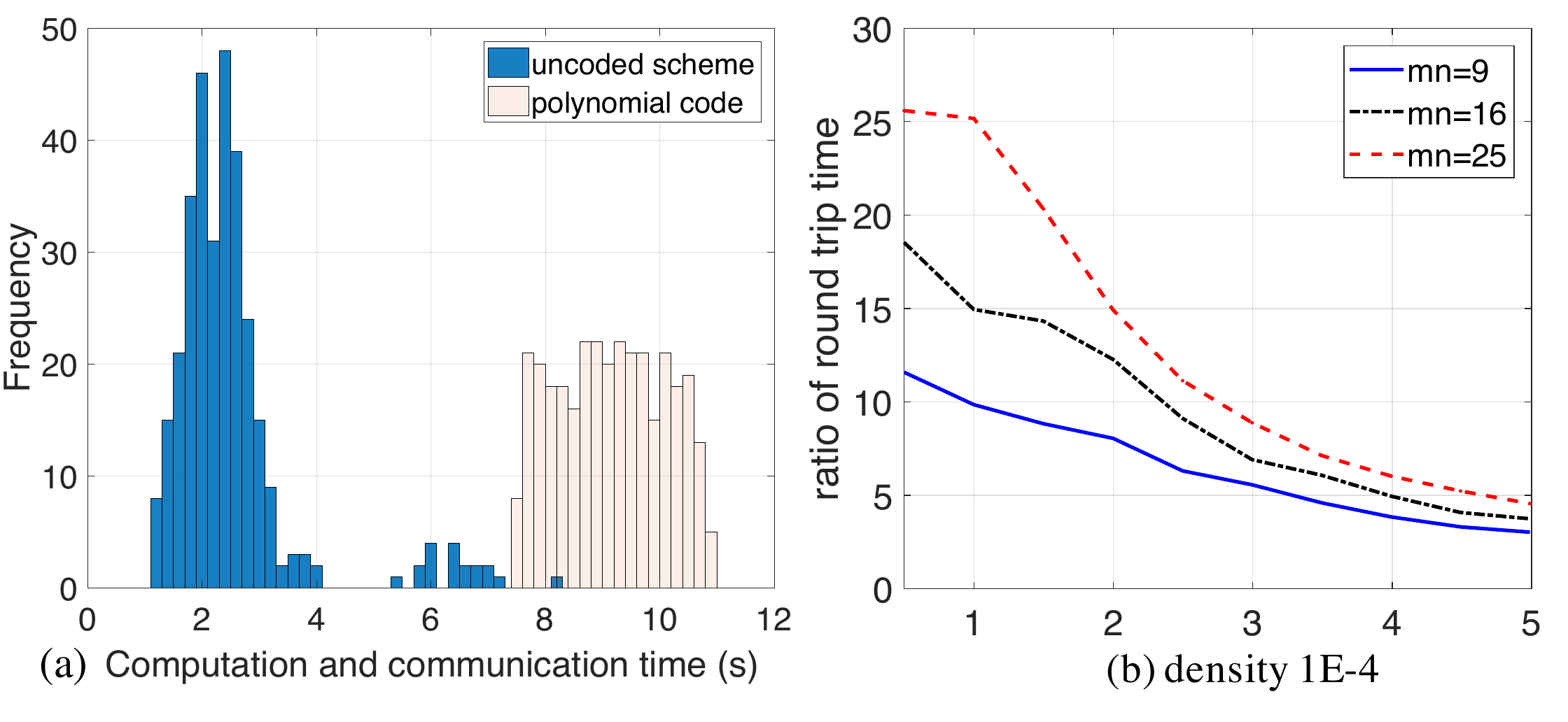}}
		\vskip -0.1in
		\caption{Measured local computation and communication time.}
		\vskip -0.5in
		\label{fig:motivation}
	\end{center}
\end{figure}

In the Figure~\ref{fig:motivation}(a), we experiment large and sparse matrix multiplication from two random Bernoulli square matrices with dimension roughly equal to $1.5\times10^5$ and the number of nonzero elements equal to $6\times10^5$. We show measurements on local computation and communication time required for $N=16$ workers to operate the polynomial code and uncoded scheme. Our finding is that the final job completion time of the polynomial code is significantly increased compared to that of the uncoded scheme. The main reason is that the increased density of input matrix leads to the increased computation time, which further incurs the even larger data transmission and higher I/O contention. In the Figure~\ref{fig:motivation}(b), we generate two $10^5$ random square Bernoulli matrices with different densities $p$. We plot the ratio of the average local computation time between the polynomial code and the uncoded scheme versus the matrix density $p$. It can be observed that the coded scheme requires $5$ or more computation time vs the uncoded scheme, and this ratio is particularly large, i.e., $O(mn)$, when the matrix is sparse. 

In certain suboptimal coded computation schemes such as MDS code, product code~\cite{lee2017speeding,lee2017high} and short dot~\cite{dutta2016short}, the generator matrices are generally dense, which also implies a heavy workload for each local worker. Moreover, the decoding algorithm of polynomial code and MDS type of codes is based on the fast polynomial interpolation algorithm in the finite field. Although it leads to nearly linear decoding time, i.e., $O(rt\ln^2(mn\ln(mn)))$, it still incurs extremely high cost when dealing with current large-scale and sparse data.  Therefore, inspired by this phenomenon, we are interested in the following key problem: \emph{can we find a coded matrix multiplication scheme that has small recovery threshold, low computation overhead and decoding complexity only dependent on $nnz(C)$?}

\subsection{Main Contribution}

In this paper, we answer this question positively by designing a novel coded computation strategy, we call \emph{sparse code}. It achieves near optimal recovery threshold $\Theta(mn)$ by exploiting the coding advantage in local computation. Moreover, such a coding scheme can exploit the sparsity of both input and output matrices, which leads to low computation load, i.e.,  $O(\ln(mn))$ times of uncoded scheme and, nearly linear decoding time $O(nnz(C)\ln(mn))$. 


The basic idea in \emph{sparse code} is: each worker chooses a random number of input submatrices based on a given degree distribution $P$; then computes a weighted linear combination $\sum_{ij} w_{ij}A_i^{\intercal}B_j$, where the weights $w_{ij}$ are randomly drawn from a finite set $S$. When the master node receives a bunch of finished tasks such that the coefficient matrix formed by weights $w_{ij}$ is full rank, it starts to operate a hybrid decoding algorithm between peeling decoding and Gaussian elimination to recover the resultant matrix $C$. 

We prove the optimality of the \emph{sparse code} by carefully designing the degree distribution $P$ and the algebraic structure of set $S$. The recovery threshold of the sparse code is mainly determined by how many tasks are required such that the coefficient matrix is full rank and the hybrid decoding algorithm recovers all the results. We design a type of Wave Soliton distribution (definition is given in Section~\ref{sec:theory}), and show that, under such a distribution, when $\Theta(mn)$ tasks are finished, the hybrid decoding algorithm will successfully decode all the results with decoding time $O(nnz(C)\ln(mn))$.

Moreover, we reduce the full rank analysis of the coefficient matrix to the determinant analysis of a random matrix in $\mathbb{R}^{mn\times mn}$. The state-of-the-art in this field is limited to the Bernoulli case~\cite{tao2007singularity, bourgain2010singularity}, in which each element is identically and independently distributed random variable. However, in our proposed sparse code, the matrix is generated from a degree distribution, which leads to dependencies among the elements in the same row. To overcome this difficulty, we find a different technical path: we first utilize the Schwartz-Zeppel Lemma~\cite{schwartz1980fast} to reduce the determinant analysis problem to the analysis of the probability that a random bipartite graph contains a perfect matching.  Then we combine the combinatoric graph theory and the probabilistic method to show that when number of $mn$ tasks are collected, the coefficient matrix is full rank with high probability.


We further utilize the above analysis to formulate an optimization problem to determine the optimal degree distribution $P$ when $mn$ is small. We finally implement and benchmark the sparse code on Ohio Super Computing Center~\cite{OhioSupercomputerCenter1987}, and empirically demonstrate its performance gain compared with the existing strategies.

\section{Preliminary}

We are interested in a matrix multiplication problem with two input matrices $A\in\mathbb{R}^{s\times r}$, $B\in\mathbb{R}^{s\times t}$ for some integers $r,s,t$. Each input matrix $A$ and $B$ is evenly divided along the column side into $m$ and $n$ submatrices, respectively.
\begin{equation}\label{eq:division}
A=[A_1,A_2,\ldots,A_m]\text{ and } B=[B_1,B_2,\ldots,B_n].
\end{equation}
Then computing the matrix $C$ is equivalent to computing $mn$ blocks $C_{ij}=A_i^{\intercal}B_j$. Let the set $W=\{C_{ij}=A_i^{\intercal}B_j|1\leq i \leq m, 1\leq j \leq n\}$ denote these components.
Given this notation, the \emph{coded distributed matrix multiplication problem} can be described as follows: define $N$ coded computation functions, denoted by
\begin{equation*}
\boldsymbol{f}=(f_1,f_2,\ldots,f_N).
\end{equation*}
Each local function $f_i$ is used by worker $i$ to compute a submatrix $\tilde{C}_{i}\in\mathbb{R}^{\frac{r}{m}\times \frac{t}{n}}=f_i(W)$ and return it to the master node. The master node waits only for the results of the partial workers $\{\tilde{C}_{i}|i\in I\subseteq\{1,\ldots,N\}\}$ to recover the final output $C$ using certain decoding functions. For any integer $k$, the recovery threshold $k(\boldsymbol{f})$ of a coded computation strategy $\boldsymbol{f}$ is defined as the minimum integer $k$ such that the master node can recover matrix $C$ from results of the any $k$ workers. The framework is illustrated in Figure~~\ref{fig:model}.

\begin{figure}[htb]
	\begin{center}
		\centerline{\includegraphics[width=4.2in]{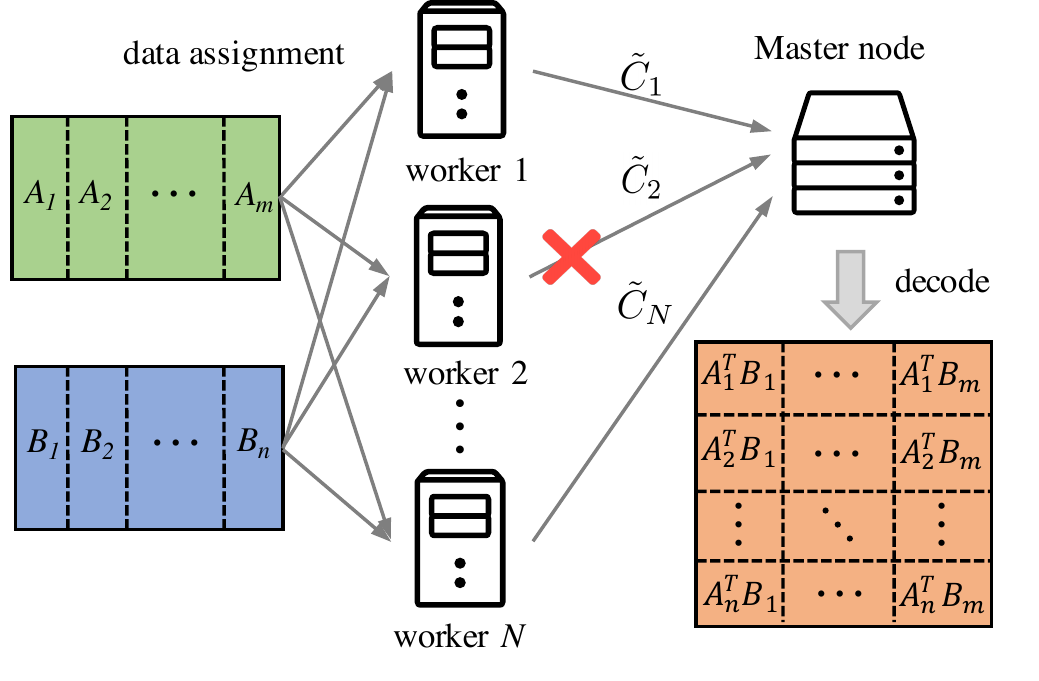}}
		\vskip -0.2in
		\caption{Framework of coded distributed matrix multiplication.}
		\label{fig:model}
	\end{center}
	\vskip -0.4in
\end{figure}

\subsection{Main Results}

The main result of this paper is the design of a new coded computation scheme, we call the \emph{sparse code}, that has the following performance.
\begin{theorem}\label{thm:main}
	The sparse code achieves a recovery threshold $\Theta(mn)$ with high probability, while allowing nearly linear decoding time $O(nnz(C)\ln(mn))$ at the master node.
\end{theorem}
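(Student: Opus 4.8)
The plan is to split the theorem into its two halves — the recovery threshold bound $\Theta(mn)$ and the decoding time $O(nnz(C)\ln(mn))$ — and to prove each by a combination of a ``correctness'' argument (the decoder, once enough tasks arrive, actually recovers $C$) and a ``complexity'' argument (the work done is as claimed). For the lower bound on the recovery threshold, observe that since $C$ consists of $mn$ independent blocks $C_{ij}=A_i^{\intercal}B_j$ and each worker returns a single block-sized matrix, an information-theoretic cut-set argument forces $k(\boldsymbol{f})\geq mn$; the real content is the matching upper bound, i.e.\ showing that $\Theta(mn)$ finished workers suffice with high probability.

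For the upper bound, I would set up the decoding as solving a linear system over the $mn$ unknowns $\{C_{ij}\}$, where the received tasks form a coefficient matrix $M$ whose rows are the random weight vectors $(w_{ij})$ drawn according to the degree distribution $P$ (the Wave Soliton distribution) with nonzero entries from the finite set $S$. There are two things to establish. First, full rank: I would invoke the Schwartz–Zippel lemma to reduce ``$\det M\neq 0$'' to ``the associated random bipartite graph — tasks on one side, blocks $C_{ij}$ on the other, with an edge whenever $w_{ij}\neq 0$ — contains a perfect matching,'' and then use combinatorial/probabilistic arguments (Hall's condition via a union bound over vertex subsets, controlling the probability that a small set of blocks has too few task-neighbors) to show that after $\Theta(mn)$ tasks the graph has a perfect matching, hence $M$ is full rank w.h.p. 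Second, efficient recovery: rather than generic Gaussian elimination on an $mn\times mn$ system, the decoder runs the hybrid peeling/Gaussian scheme, and I would argue — analogously to the standard analysis of LT/Soliton codes via the ``and-or tree'' or differential-equation/Wormald method applied to the peeling process — that the Wave Soliton degree distribution keeps a positive fraction of degree-one ripple vertices throughout, so peeling alone clears all $mn$ blocks once $\Theta(mn)$ tasks are in (with only $O(1)$ Gaussian-elimination fallbacks, absorbed into constants).

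For the decoding-time claim, the key observation is that each peeling step substitutes a recovered block $C_{ij}$ into the remaining equations; an edge $(\text{task}, C_{ij})$ costs the time to add/subtract a $\tfrac{r}{m}\times\tfrac{t}{n}$ matrix, but because $C$ is sparse these block operations touch only $O(nnz(C_{ij}))$ entries rather than $O(rt/mn)$, so summing over all edges — of which the Wave Soliton distribution guarantees $O(mn\ln(mn))$ in expectation — gives total work $O(nnz(C)\ln(mn))$. I would also need to note that the sparsity is preserved: since each task is a linear combination $\sum w_{ij}C_{ij}$, its nonzero pattern is contained in the union of the patterns of its constituent blocks, so no fill-in beyond the $\ln(mn)$ factor occurs, and the $O(1)$ Gaussian-elimination steps act on $O(1)$-sized subsystems and contribute only $O(nnz(C))$ more.

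The main obstacle I anticipate is the full-rank / perfect-matching analysis: unlike the classical singularity results for i.i.d.\ Bernoulli matrices, here the entries within a row are dependent (the row's support size is itself the random degree drawn from $P$, and the weights are then placed uniformly), so standard anti-concentration tools do not apply directly. The delicate part is choosing $P$ (the Wave Soliton distribution, as opposed to the plain Soliton distribution) so that simultaneously (i) Hall's condition holds w.h.p.\ with only $\Theta(mn)$ rows, which needs enough high-degree rows to cover every small block-subset, and (ii) the peeling process does not stall, which needs enough low-degree rows to seed and sustain the ripple — these two requirements pull in opposite directions, and making both work with the \emph{same} distribution and the \emph{same} $\Theta(mn)$ threshold is where the careful design and the bulk of the technical argument will go.
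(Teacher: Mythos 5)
Your proposal matches the paper's proof essentially step for step: the paper establishes Theorem~\ref{thm:main} by combining (i) the Schwartz--Zippel reduction of full rank of $M$ to a perfect-matching/Hall's-condition analysis of the random bipartite graph, carried out by a union bound over violating sets of each size scale (Theorem~\ref{thm:fullrank}), (ii) an LT-code-style peeling analysis under the Wave Soliton distribution via the condition $\lambda(1-\rho(1-x))<x$, showing $K=\Theta(mn)$ suffices with only $\Theta(1)$ rooting steps (Lemma~\ref{lm:decode} and Theorem~\ref{thm:decode}), and (iii) the edge-counting complexity bound $O((c+1)\alpha K/mn\cdot nnz(C))$ with $\alpha=\Theta(\ln(mn))$. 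The only small inaccuracy is your description of the Gaussian-elimination fallback as acting on $O(1)$-sized subsystems: the paper's rooting step recovers a block as a linear combination of \emph{all} $K$ received results, costing $O(\sum_k nnz(\tilde{C}_k))$ per step, but since only $\Theta(1)$ such steps occur this does not change the final bound.
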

As shown in TABLE~\ref{tab:comparison}, compared to the state of the art, the sparse code provides order-wise improvement in terms of the recovery threshold, computation overhead and decoding complexity. Specifically, the decoding time of MDS code~\cite{lee2017speeding}, product code~\cite{lee2017high}, LDPC code and polynomial code~\cite{yu2017polynomial} is $O(rt\ln^2(mn\ln(mn)))$, which is dependent on the dimension of the output matrix. Instead, the proposed sparse code actually exhibits a decoding complexity that is nearly linear time in number of nonzero elements of the output matrix $C$, which is extremely less than the product of dimension. Although the decoding complexity of sparse MDS code is linear to $nnz(C)$, it is also dependent on the $mn$. To the best of our knowledge, this is the first coded distributed matrix multiplication scheme with complexity independent of the dimension.

\begin{table}[t]
	\small
	\vskip -0.15in
	\caption{Comparison of Existing Coding Schemes}
	\vskip 0.05in
	\label{tab:comparison}
	\centering
	\begin{threeparttable}
		\begin{tabular}{|c|c|c|c|}
			\hline
			Scheme & Recovery Threshold & Computation Overhead & Decoding Complexity\\
			\hline
			MDS code~\cite{lee2017speeding} &  $\Theta(N)$ & $\Theta(mn)$  & $\tilde{O}(rt)\tnote{2}$  \\
			\hline
			sparse MDS code~\cite{lee2017coded} &  $\Theta^*(mn)$\tnote{2} & $\Theta(\ln(mn))$  & $\tilde{O}(mn\cdot nnz(C))$  \\
			\hline
			product code~\cite{lee2017high} & $\Theta^*(mn)$ & $\Theta(mn)$ & $\tilde{O}(rt)$   \\
			\hline
			LDPC code~\cite{lee2017high} & $\Theta^*(mn)$ &$\Theta(\ln(mn))$ & $\tilde{O}(rt)$   \\
			\hline
			polynomial code~\cite{yu2017polynomial} &  $mn$ & $mn$  & $\tilde{O}(rt)$  \\
			\hline
			our scheme &  $\Theta^*(mn)$ & $\Theta(\ln(mn))$  & $\tilde{O}(nnz(C))$ \\
			\hline
		\end{tabular}
		\begin{tablenotes}
			\scriptsize
			\item[1] Computation overhead is the time of local computation over uncoded scheme.
			\item[2] $\tilde{O}(\cdot)$ omits the logarithmic terms and $O^*(\cdot)$ refers the high probability result.
		\end{tablenotes}
	\end{threeparttable}
	\vskip -0.2in
\end{table}

Regarding the recovery threshold, the existing work~\cite{yu2017polynomial} has applied a cut-set type argument to show that the minimum recovery threshold of any scheme is
\begin{equation}
K^*=\min\limits_{\boldsymbol{f}} k(\boldsymbol{f}) =  mn.
\end{equation}
The proposed sparse code matches this lower bound with a constant gap and high probability. 

\section{Sparse Codes}

In this section, we first demonstrate the main idea of the sparse code through a motivating example. We then formally describe the construction of the general sparse code and its decoding algorithm. 

\subsection{Motivating Example}

Consider a distributed matrix multiplication task $C=A^{\intercal}B$ using $N=6$ workers. Let $m=2$ and $n=2$ and each input matrix $A$ and $B$ be evenly divided as
\begin{equation*}
A=[A_1, A_2]\quad\text{ and } \quad B=[B_1, B_2].
\end{equation*}
Then computing the matrix $C$ is equivalent to computing following $4$ blocks.
\begin{equation*}
C=A^{\intercal}B=\begin{bmatrix}
A_1^{\intercal}B_1 & A_1^{\intercal}B_2\\
A_2^{\intercal}B_1 & A_2^{\intercal}B_2
\end{bmatrix}
\end{equation*}

We design a coded computation strategy via the following procedure: each worker $i$ locally computes a weighted sum of four components in matrix $C$.
\begin{equation*}
\tilde{C}_i=w^i_1 A_1^{\intercal}B_1 +w^i_2 A_1^{\intercal}B_2 + w^i_3 A_2^{\intercal}B_1 + w^i_4 A_2^{\intercal}B_2,
\end{equation*}
Each weight $w^i_j$ is independently and identically distributed Bernoulli random variable with parameter $p$. For example, Let $p=1/3$, then on average, $2/3$ of these weights are equal to $0$. We randomly generate the following $N=6$ local computation tasks.
\begin{align*}
&\tilde{C}_1 = A_1^{\intercal}B_1 + A_1^{\intercal}B_2, \quad \tilde{C}_2 = A_1^{\intercal}B_2 + A_2^{\intercal}B_1\\
&\tilde{C}_3 = A_1^{\intercal}B_1,\quad \tilde{C}_4 = A_1^{\intercal}B_2 + A_2^{\intercal}B_2\\
&\tilde{C}_5 = A_2^{\intercal}B_1 + A_2^{\intercal}B_2, \quad \tilde{C}_6 = A_1^{\intercal}B_1 + A_2^{\intercal}B_1
\end{align*}

\begin{figure*}[t]
	\begin{center}
		\centerline{\includegraphics[width=6.5in]{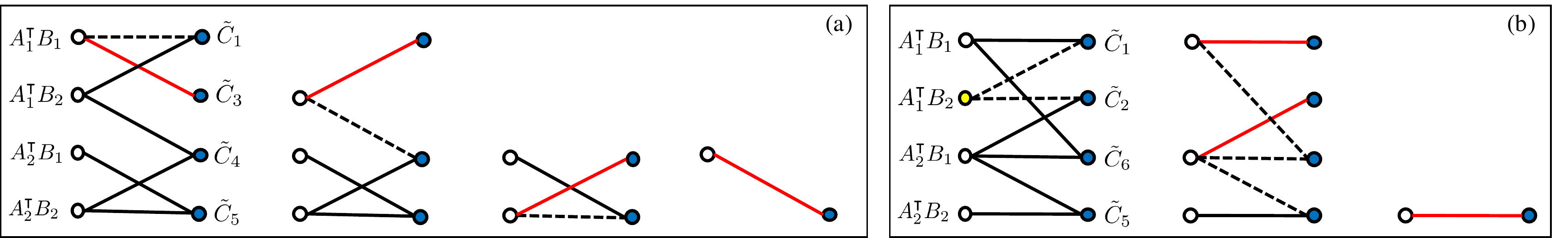}}
		\vskip -0.1in
		\caption{Example of the hybrid peeling and Gaussian decoding process of sparse code.}
		\label{fig:example}
	\end{center}
	\vskip -0.4in
\end{figure*}

Suppose that both the $2$rd and $6$th workers are stragglers and the master node has collected the results from nodes $\{1,3,4,5\}$. According to the designed computation strategy, we have following group of linear systems.
\begin{equation*}
\begin{bmatrix}
\tilde{C}_1\\
\tilde{C}_3\\
\tilde{C}_4\\
\tilde{C}_5
\end{bmatrix} = \begin{bmatrix}
1 & 1 & 0 & 0\\
1 & 0 & 0 & 0\\
0 & 1 & 0 & 1\\
0 & 0 & 1 & 1
\end{bmatrix}\cdot\begin{bmatrix}
A_1^{\intercal}B_1\\
A_1^{\intercal}B_2\\
A_2^{\intercal}B_1\\
A_2^{\intercal}B_2
\end{bmatrix}
\end{equation*}
One can easily check that the above coefficient matrix is full rank. Therefore, one straightforward way to recover $C$ is to solve $rt/4$ linear systems, which proves decodability. However, the complexity of this decoding algorithm is expensive, i.e., $O(rt)$ in this case. 

Interestingly, we can use a type of \emph{peeling algorithm} to recover the matrix $C$ with only three sparse matrix additions: first, we can straightforwardly recover the block $A_1^{\intercal}B_1$ from worker $3$. Then we can use the result of worker $1$ to recover block $A_1^{\intercal}B_2 = \tilde{C}_1-A_1^{\intercal}B_1$. Further, we can use the results of worker $4$ to recover block $A_2^{\intercal}B_2=\tilde{C}_4-A_1^{\intercal}B_2$ and use the results of worker $5$ to obtain block $A_2^{\intercal}B_1=\tilde{C}_5-A_2^{\intercal}B_2$. Actually, the above peeling decoding algorithm can be viewed as an edge-removal process in a bipartite graph.  We construct a bipartite graph with one partition being the original blocks $W$ and the other partition being the finished coded computation tasks $\{\tilde{C}_i\}$. Two nodes are connected if such computation task contains that block. As shown in the Figure~\ref{fig:example}(a), in each iteration, we find a \emph{ripple} (degree one node) in the right that can be used to recover one node of left. We remove the adjacent edges of that left node, which might produce some new ripples in the right. Then we iterate this process until we decode all blocks. 

Based on the above graphical illustration, the key point of successful decoding is the existence of the ripple during the edge removal process. Clearly, this is not always true from the design of our coding scheme and the uncertainty in the cloud. For example, if both the $3$rd and $4$th workers are stragglers and the master node has collected the results from node $\{1,2,5,6\}$, even though the coefficient matrix is full rank, there exists no ripple in the graph. To avoid this problem, we can randomly pick one block and recover it through a linear combination of the collected results, then use this block to continue the decoding process. This particular linear combination can be determined by solving a linear system. Suppose that we choose to recover $A_1^{\intercal}B_2$,  then we can recover it via the following linear combination.
\begin{equation*}
A_1^{\intercal}B_2=\frac{1}{2}\tilde{C}_1+\frac{1}{2}\tilde{C}_2-\frac{1}{2}\tilde{C}_6.
\end{equation*}
As illustrated in the Figure~\ref{fig:example}(b), we can recover the rest of the blocks using the same peeling decoding process. The above decoding algorithm only involves simple matrix additions and the total decoding time is $O(nnz(C))$. 

\subsection{General Sparse Code}

Now we present the construction and decoding of the sparse code in a general setting. We first evenly divide the input matrices $A$ and $B$ along the column side into $m$ and $n$ submatrices, as defined in (\ref{eq:division}). Then we define a set $S$ that contains $m^2n^2$ distinct elements except zero element. One simplest example of  $S$ is $[m^2n^2]\triangleq\{1,2,\ldots,m^2n^2\}$. Under this setting, we define the following class of coded computation strategies.
\begin{definition}\label{def:spcode}\emph{(Sparse Code)}
	Given the parameter $P\in \mathbb{R}^{mn}$ and set $S$, we define the $(P,S)-$sparse code as: for each worker $k\in[N]$, compute
	\begin{equation}\label{eq:linearcombine}
	\tilde{C}_k=f_k(W) = \sum\limits_{i=1}^{m}\sum\limits_{j=1}^{n} w^k_{ij}A_i^{\intercal}B_j.
	\end{equation}
	Here the parameter $P=[p_1,p_2,\ldots,p_{mn}]$ is the degree distribution, where the $p_l$ is the probability that there exists number of $l$ nonzero weights $w^k_{ij}$ in each worker $k$. The value of each nonzero weight $w^{k}_{ij}$ is picked from set $S$ independently and uniformly at random.
\end{definition}

Without loss of generality, suppose that the master node collects results from the first $K$ workers with $K\leq N$. Given the above coding scheme, we have
\begin{equation*}
\begin{bmatrix}
\tilde{C}_1\\
\tilde{C}_2\\
\vdots\\
\tilde{C}_K
\end{bmatrix} = \begin{bmatrix}
w^1_{11} & w^1_{12} & \cdots & w^1_{mn}\\
w^2_{11} & w^2_{12} & \cdots & w^2_{mn}\\
\vdots & \vdots & \ddots & \vdots\\
w^K_{11} & w^K_{12} & \cdots & w^K_{mn}
\end{bmatrix}\cdot\begin{bmatrix}
A_1^{\intercal}B_1\\
A_1^{\intercal}B_2\\
\vdots\\
A_m^{\intercal}B_n
\end{bmatrix}.
\end{equation*}
We use $M\in \mathbb{R}^{K\times mn}$ to represent the above coefficient matrix. To guarantee decodability, the master node should collect results from enough number of workers such that the coefficient matrix $M$ is of column full rank. Then the master node goes through a \textbf{peeling decoding} process: it first finds a ripple worker to recover one block. Then for each collected results, it subtracts this block if the computation task contains this block. If there exists no ripple in our peeling decoding process, we go to \textbf{rooting step}: randomly pick a particular block $A_i^{\intercal}B_j$. The following lemma shows that we can recover this block via a linear combination of the results $\{\tilde{C}_k\}_{k=1}^{K}$.
\begin{lemma} \emph{(rooting step)}
	If rank$(M)=mn$, for any $k_0\in\{1,2,\ldots,mn\}$, we can recover a particular block $A_i^{\intercal}B_j$ with column index $k_0$ in matrix $M$ via the following linear combination.
	\begin{equation}\label{eq:borrow}
	A_i^{\intercal}B_j=\sum\nolimits_{k=1}^K u_{k} \tilde{C}_k.
	\end{equation}
	The vector $u=[u_1,\ldots,u_K]$ can be determined by solving $M^Tu=e_{k_0}$, where $e_{k_0}\in\mathbb{R}^{K}$ is a unit vector with unique $1$ locating at the index $k_0$. 
\end{lemma}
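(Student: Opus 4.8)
The plan is to verify the claim by a direct linear-algebra computation, treating each block $A_i^{\intercal}B_j$ as an element of the vector space $\mathbb{R}^{(r/m)\times(t/n)}$ and reading the collected results as a matrix--vector product over that space. First I would establish that the system $M^Tu=e_{k_0}$ is solvable. Since $\mathrm{rank}(M)=mn$, we also have $\mathrm{rank}(M^T)=mn$, so the linear map $M^T\colon\mathbb{R}^K\to\mathbb{R}^{mn}$ has image equal to all of $\mathbb{R}^{mn}$; hence for every $k_0\in\{1,\dots,mn\}$ there exists $u\in\mathbb{R}^K$ with $M^Tu=e_{k_0}$. (If $K>mn$ this $u$ is not unique, but any solution serves.)

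Next I would substitute the definition (\ref{eq:linearcombine}) of $\tilde C_k$ into the candidate combination and interchange the order of summation:
\[
\sum_{k=1}^K u_k\tilde C_k=\sum_{k=1}^K u_k\sum_{i=1}^{m}\sum_{j=1}^{n} w^k_{ij}A_i^{\intercal}B_j=\sum_{i=1}^{m}\sum_{j=1}^{n}\Bigl(\sum_{k=1}^K u_k w^k_{ij}\Bigr)A_i^{\intercal}B_j .
\]
The inner coefficient $\sum_{k=1}^K u_k w^k_{ij}$ is precisely the entry of $M^Tu$ indexed by the column $(i,j)$ of $M$. Because $M^Tu=e_{k_0}$, this coefficient equals $1$ for the column index $k_0$ and $0$ for every other column, so the double sum collapses to the single block $A_i^{\intercal}B_j$ whose column index in $M$ is $k_0$. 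This is exactly (\ref{eq:borrow}), which completes the argument.

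There is essentially no obstacle here: the statement is a coordinate-by-coordinate application of the fact that a full-column-rank matrix admits a left inverse. The only points deserving a word of care are (i) that the objects being combined are matrices rather than scalars, so the rearrangement above should be read entry-wise (equivalently, after vectorizing each block), which is valid since every operation used is linear; and (ii) the bookkeeping between the double index $(i,j)$ and the single column index $k_0\in\{1,\dots,mn\}$, which is purely notational. I would also note in passing that $u$ is obtained by solving one $mn\times mn$ (or $K\times K$) linear system, independent of the matrix dimensions $r,t$, so invoking the rooting step does not affect the $O(nnz(C)\ln(mn))$ decoding-time claim stated elsewhere.
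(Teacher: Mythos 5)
Your argument is correct and is exactly the formalization of the paper's one-line justification (the paper only remarks that one finds a linear combination of the rows of $M$ eliminating all blocks but the $k_0$-th, without writing out the computation): full column rank of $M$ gives solvability of $M^Tu=e_{k_0}$, and interchanging the sums collapses the combination to the desired block. The only minor point worth noting is that $e_{k_0}$ should live in $\mathbb{R}^{mn}$ (not $\mathbb{R}^{K}$ as the lemma statement writes), which your proof implicitly and correctly assumes.
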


The basic intuition is to find a linear combination of row vectors of matrix $M$ such that the row vectors eliminate all other blocks except the particular block $A_i^{\intercal}B_j$. The whole procedure is listed in Algorithm~\ref{alg:spcode}.

Here we conduct some analysis of the complexity of Algorithm~\ref{alg:spcode}. During each iteration, the complexity of operation $\tilde{C}_{k}=\tilde{C}_{k}-M_{kk_0}A_i^{\intercal}B_j$ is $O(nnz(A_i^{\intercal}B_j))$. Suppose that the number of average nonzero elements in each row of coefficient matrix $M$ is $\alpha$. Then each block $A_i^{\intercal}B_j$ will be used $O(\alpha K/mn)$ times in average. Further, suppose that there exists number of $c$ blocks requiring the rooting step (\ref{eq:borrow}) to recover, the complexity in each step is $O(\sum_{k}nnz(\tilde{C}_k))$. On average, each coding block $\tilde{C}_k$ is equal to the sum of $O(\alpha)$ original blocks. Therefore, the complexity of Algorithm~\ref{alg:spcode} is
\begin{align}
&O\left(\frac{\alpha K}{mn}\sum\nolimits_{i,j}nnz(A_i^{\intercal}B_j)\right)+O\left(c\sum\nolimits_{k}nnz(\tilde{C}_k)\right)\notag\\
=&O\left((c+1)\alpha K/mn\cdot nnz(C)\right).\label{eq:timecomplexity}
\end{align}
We can observe that the decoding time is linear in the density of matrix $M$, the recovery threshold $K$ and the number of rooting steps (\ref{eq:borrow}). In the next section, we will show that, under a good choice of degree distribution $P$ and set $S$, we can achieve the result in Theorem~\ref{thm:main}.
\begin{algorithm}[tb]
	\caption{Sparse code (master node's protocol)}
	\label{alg:spcode}
	\begin{algorithmic}
		\REPEAT
		\STATE The master node assign the coded computation tasks according to Definition~\ref{def:spcode}.
		\UNTIL{the master node collects results with rank($M$)$=mn$ and $K$ is larger than a given threshold}.
		\REPEAT
		\STATE Find a row $M_{k'}$ in matrix $M$ with $\|M_{k'}\|_0=1$.
		\IF{such row does not exist }
		\STATE Randomly pick a $k_0\in \{1,\ldots,mn\}$ and recover corresponding block $A_i^{\intercal}B_j$ by (\ref{eq:borrow}).
		\ELSE
		\STATE Recover the block $A_i^{\intercal}B_j$ from $\tilde{C}_{k'}$.
		\ENDIF
		\STATE Suppose that the column index of the recovered block $A_i^{\intercal}B_j$ in matrix $M$ is $k_0$.
		\FOR{each computation results $\tilde{C}_{k}$}
		\IF{$M_{kk_0}$ is nonzero}
		\STATE $\tilde{C}_{k}=\tilde{C}_{k}-M_{kk_0}A_i^{\intercal}B_j$ and set $M_{kk_0}=0$.
		\ENDIF
		\ENDFOR
		\UNTIL{every block of matrix $C$ is recovered.}
	\end{algorithmic}
\end{algorithm}

\section{Theoretical Analysis\label{sec:theory}}

As discussed in the preceding section, to reduce the decoding complexity, it is good to make the coefficient matrix $M$ as sparse as possible. However, the lower density will require that the master node collects a larger number of workers to enable the full rank of matrix $M$. For example, in the extreme case, if we randomly assign one nonzero element in each row of $M$. The analysis of the classical balls and bins process implies that, when $K=O(mn\ln(mn))$, the matrix $M$ is full rank, which is far from the optimal recovery threshold. On the other hand, the polynomial code~\cite{yu2017polynomial} achieves the optimal recovery threshold. Nonetheless, it exhibits the densest matrix $M$, i.e., $K\times mn$ nonzero elements, which significantly increases the local computation, communication  and final decoding time. 

In this section, we will design the sparse code between these two extremes. This code has near optimal recovery threshold $K=\Theta(mn)$ and constant number of rooting steps (\ref{eq:borrow}) with high probability and extremely sparse matrix $M$ with $\alpha=\Theta(\ln(mn))$ nonzero elements in each row. The main idea is to choose the following degree distribution. 

\begin{definition}\emph{(Wave Soliton distribution)}
	The Wave Soliton distribution $P_w=[p_1,p_2,\ldots,p_{mn}]$ is defined as follows.
	\begin{equation}\label{eq:wave}
	p_k=\left\{
	\begin{aligned}
	&\frac{\tau}{mn}, k=1; \frac{\tau}{70}, k=2\\
	&\frac{\tau}{k(k-1)}, 3\leq k\leq mn
	\end{aligned}
	\right..
	\end{equation}
	The parameter $\tau = 35/18$ is the normalizing factor.
\end{definition}
The above degree distribution is modified from the Soliton distribution~\cite{luby2002lt}. In particular, we cap the original Soliton distribution at the maximum degree $mn$, and remove a constant weight from degree $2$ to other larger degrees. It can be observed that the recovery threshold $K$ of the proposed sparse code depends on two factors: (i) the full rank of coefficient matrix $M$; (ii) the successful decoding of peeling algorithm with constant number of rooting steps. 

\subsection{Full Rank Probability\label{sec:fullrank}}

Our first main result is to show that when $K=mn$, the coefficient matrix $M$ is full rank with high probability. Suppose that $K=mn$, we can regard the formation of the matrix $M$ via the following random graph model.
\begin{definition} \emph{(Random balanced bipartite graph)} Let $G(V_1,V_2,P)$ be a random blanced bipartite graph, in which $|V_1|=|V_2|=mn$. Each node $v\in V_2$ independently and randomly connects to $l$ nodes in partition $V_1$ with probability $p_l$.
\end{definition}
Define an Edmonds matrix $M(x)\in\mathbb{R}^{mn\times mn}$ of graph $G(V_1,V_2,P)$ with $[M(x)]_{ij}=x_{ij}$ if vertices $v_i\in V_1$, $v_j\in V_2$ are connected, and $[M(x)]_{ij}=0,$ otherwise. The coefficient matrix $M$ can be obtained by assigning each $x_{ij}$ a value from $S$ independently and uniformly at random. Then the probability that matrix $M$ is full rank is equal to the probability that the determinant of the Edmonds matrix $M(x)$ is nonzero at the assigning values $x_{ij}$. The following technical lemma~\cite{schwartz1980fast} provides a simple lower bound of the such an event.
\begin{lemma} \emph{(Schwartz-Zeppel Lemma)} Let $f(x_1,\ldots,x_N)$ be a nonzero polynomial with degree $d$. Let $S$ be a finite set in $\mathbb{R}$ with $|S|=d^2$. If we assign each variable a value from $S$ independently and uniformly at random, then
	\begin{equation}
	\mathbb{P}(f(x_1,x_2,\ldots,x_N)\neq 0)\geq 1- d^{-1}.
	\end{equation}
\end{lemma}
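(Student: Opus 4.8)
The plan is to establish the statement by induction on the number of variables $N$; this is the classical route to the Schwartz--Zippel Lemma, and the particular bound claimed here comes out at the end by specializing $|S| = d^2$. Throughout I take ``degree'' to mean total degree. First I would dispose of the base case $N = 1$: a nonzero univariate polynomial of degree $d$ over $\mathbb{R}$ has at most $d$ roots, so if $x_1$ is drawn uniformly from $S$ then $\mathbb{P}(f(x_1) = 0) \leq d/|S|$, and with $|S| = d^2$ this is at most $1/d$, hence $\mathbb{P}(f(x_1) \neq 0) \geq 1 - 1/d$.

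For the inductive step I would single out one variable, say $x_1$, and write
\[
f(x_1,\ldots,x_N) \;=\; \sum_{i=0}^{k} x_1^{\,i}\, g_i(x_2,\ldots,x_N),
\]
where $k$ is the largest exponent of $x_1$ that actually occurs, so that $g_k$ is a nonzero polynomial in the $N-1$ variables $x_2,\ldots,x_N$ with $\deg g_k \leq d - k$. Then I would condition on the random assignment to $x_2,\ldots,x_N$. By the induction hypothesis the ``bad'' event $B = \{g_k(x_2,\ldots,x_N) = 0\}$ satisfies $\mathbb{P}(B) \leq (d-k)/|S|$. On the complement $\bar B$, the leading coefficient of $f$ regarded as a polynomial in $x_1$ is nonzero, so $f$ restricts to a univariate polynomial of degree exactly $k$ in $x_1$ and therefore vanishes at a uniform $x_1 \in S$ with probability at most $k/|S|$. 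Combining these estimates through
\[
\mathbb{P}(f = 0) \;\leq\; \mathbb{P}(B) + \mathbb{P}(f = 0 \mid \bar B) \;\leq\; \frac{d-k}{|S|} + \frac{k}{|S|} \;=\; \frac{d}{|S|},
\]
and plugging in $|S| = d^2$, yields $\mathbb{P}(f = 0) \leq 1/d$, i.e. $\mathbb{P}(f \neq 0) \geq 1 - 1/d$, which closes the induction.

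I do not expect a genuine conceptual obstacle; the care is entirely in the bookkeeping, and that is the ``hard part'' such as it is. The two points to get right are: (i) that $g_k$ is truly a nonzero polynomial, so the induction hypothesis applies to it, together with the degree accounting $\deg g_k \leq d-k$; and (ii) that the conditional decomposition $\mathbb{P}(f=0) \leq \mathbb{P}(B) + \mathbb{P}(f = 0 \mid \bar B)$ is legitimate, which uses only $\mathbb{P}(\{f=0\}\cap\bar B) \leq \mathbb{P}(f=0\mid\bar B)$ together with a union over $B$ and $\bar B$. I would also remark that although the statement is phrased over $\mathbb{R}$, the argument uses only that a degree-$d$ univariate polynomial over a field has at most $d$ roots, so it applies unchanged in the setting where it is invoked in Section~\ref{sec:fullrank}: there $f$ is the determinant of the $mn\times mn$ Edmonds matrix $M(x)$, which is a polynomial of total degree $d = mn$ in the variables $x_{ij}$, consistent with the earlier requirement that $|S| = m^2n^2 = d^2$.
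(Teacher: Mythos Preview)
Your proof is correct and is exactly the standard inductive argument for the Schwartz--Zippel Lemma. The paper does not actually prove this lemma; it is simply quoted as a known technical result with a citation to~\cite{schwartz1980fast}, so there is no in-paper proof to compare against, and your write-up supplies precisely the classical proof one would expect.
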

A classic result in graph theory is that \emph{a balanced bipartite graph contains a perfect matching if and only if the determinant of Edmonds matrix, i.e., $|M(x)|$,  is a nonzero polynomial}. Combining this result with Schwartz-Zeppel Lemma, we can finally reduce the analysis of the full rank probability of the coefficient matrix $M$ to the probability that the random graph $G(V_1,V_2,P_w)$ contains a perfect matching.
\begin{align}
\mathbb{P}(|M|\neq 0)=&\underbrace{\mathbb{P}(|M|\neq 0\big||M(x)|\not\equiv​ 0)}_{\text{ S-Z Lemma: }\geq 1-1/mn}\cdot\underbrace{\mathbb{P}(|M(x)|\not\equiv​ 0)}_{\text{perfect matching}}\notag
\end{align} 
Formally, we have the following result about the existence of the perfect matching in the above defined random bipartite graph.
\begin{theorem}\label{thm:fullrank}\emph{(Existence of perfect matching)}
	If the graph $G(V_1,V_2, P)$ is generated under the Wave Soliton distribution (\ref{eq:wave}), then there exists a constant $c>0$ such that
	\begin{equation*}
	\mathbb{P}(\text{$G$ contains a perfect matching})> 1-c(mn)^{-0.94}.
	\end{equation*}
\end{theorem}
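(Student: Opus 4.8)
The plan is to use Hall's theorem together with a first-moment (union bound) argument over all potential Hall-condition violations. By Hall's marriage theorem, $G(V_1,V_2,P)$ fails to contain a perfect matching if and only if there is a set $T\subseteq V_2$ with $|N(T)| < |T|$, equivalently a set $S\subseteq V_1$ with $|S| < mn$ and $|S|$ "blocking" in the sense that all neighbors of some $(|S|+1)$-subset of $V_2$ lie inside $S$. So I would write $\mathbb{P}(\text{no perfect matching}) \le \sum_{s=1}^{mn-1} \mathbb{P}(\exists\, S\subseteq V_1, |S|=s, \text{ and } \exists\, T\subseteq V_2, |T|=s+1, N(T)\subseteq S)$, then bound each term by $\binom{mn}{s}\binom{mn}{s+1} q_s^{\,s+1}$, where $q_s$ is the probability that a single fixed node $v\in V_2$ has all its (random number of) neighbors inside a fixed $s$-set of $V_1$. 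Using the degree distribution, $q_s = \sum_{l=1}^{mn} p_l \binom{s}{l}/\binom{mn}{l}$.

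The core of the argument is then a careful estimate of $q_s$ and of the resulting sum, split into three regimes of $s$. For small $s$ (say $s = O(1)$ or $s = o(mn)$), the dominant contribution to $q_s$ comes from the degree-one term $p_1 \cdot s/(mn) = \tau s/(mn)^2$, and one checks that $\binom{mn}{s}\binom{mn}{s+1}q_s^{s+1}$ is dominated by its $s=1$ value, which is $\Theta((mn)^2 \cdot (1/(mn)^2)^2 \cdot (mn)^2) $ — I need to verify this gives the claimed $(mn)^{-0.94}$; in fact the $s=1$ term is roughly $\binom{mn}{1}\binom{mn}{2}q_1^2$ with $q_1 = p_1/(mn) = \tau/(mn)^2$, giving $O((mn)^{3} (mn)^{-4}) = O((mn)^{-1})$, and one must track constants to land at exponent $0.94$ rather than $1$ (the slack is presumably where the $\tau/70$ modification of the degree-2 mass and the capping at $mn$ are used). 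For the middle range ($\epsilon\, mn \le s \le (1-\epsilon) mn$), I would bound $q_s$ using the bulk of the distribution (degrees $\sim \ln(mn)$), where $\binom{s}{l}/\binom{mn}{l} \approx (s/mn)^l$, so $q_s \approx \sum_l p_l (s/mn)^l$, which is bounded away from $1$ by a constant; combined with the entropy bound $\binom{mn}{s} \le 2^{mn}$ this gives an exponentially small contribution, hence negligible. For the top range ($s = mn - O(1)$ up to $mn-1$), symmetry of the problem (looking at it from the $V_1$ side, or bounding $1-q_s$) should make these terms comparably small to the small-$s$ terms; here the key point is that a fixed $V_2$-node misses a fixed element of $V_1$ with probability $1 - (\text{expected degree})/(mn) = 1 - \Theta(\ln(mn)/mn)$, and raising this to the power $s+1 \approx mn$ forces a constant-size deficiency set, mirroring the small-$s$ analysis.

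The main obstacle I anticipate is the bookkeeping in the small-$s$ and large-$s$ regimes: getting the sum of those boundary terms to be $O((mn)^{-0.94})$ rather than merely $o(1)$ requires tracking the exact constants $\tau = 35/18$, the degree-2 mass $\tau/70$, and the truncation, and verifying that no term (in particular $s=1$, $s=2$, and $s=mn-1$) exceeds the target rate. A secondary subtlety is justifying the reduction itself: the Schwartz–Zippel step needs $|S| = d^2$ where $d = mn$ is the degree of $\det M(x)$, which the paper has arranged, and one must be careful that the events "$|M(x)|\not\equiv 0$" (a graph-theoretic / perfect-matching event) and the conditional full-rank event compose correctly — but this is exactly the displayed factorization in the excerpt, so it is routine. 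I would also need the standard fact that $\det M(x) \not\equiv 0 \iff G$ has a perfect matching, which is classical (the Edmonds matrix determinant expands as a signed sum over perfect matchings of distinct monomials, so no cancellation occurs).
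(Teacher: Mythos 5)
Your reduction via Hall's theorem and the union bound $\sum_s \binom{mn}{s}\binom{mn}{s+1}q_s^{\,s+1}$ with $q_s=\sum_l p_l\binom{s}{l}/\binom{mn}{l}$ is essentially the paper's argument for violating sets whose small side lies in $V_2$, and your small- and middle-range estimates (including identifying $\tau=35/18$ as the source of the exponent $0.94$) are sound. The genuine gap is the top range, $s$ close to $mn$. Writing $s=mn-j$, your term for a deficiency-$j$ violating set is roughly $\binom{mn}{j}\binom{mn}{j-1}\bigl(1-j\,\mathbb{E}[\deg]/mn\bigr)^{mn}\approx (mn)^{2j-1-j\tau}/\bigl(j!\,(j-1)!\bigr)$, whose exponent $j(2-\tau)-1=j/18-1$ is \emph{increasing} in $j$ because $\tau<2$: it is positive once $j\geq 19$, and for $j=\Theta(mn)$ (e.g.\ $s=0.9\,mn$, where $q_s\approx \Omega_w(0.9)\approx 0.54$ while $\binom{mn}{s}^2\approx e^{0.65\,mn}$) the term is $e^{\Theta(mn)}$. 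So the union bound over all $1\leq s\leq mn-1$ diverges rather than summing to $O((mn)^{-0.94})$. Your proposed rescue for this range --- ``symmetry, looking at it from the $V_1$ side'' --- is not available as stated: the model is asymmetric (only the $V_2$-vertices draw independent degrees from $P_w$; the $V_1$-degrees are dependent), so a large violating $T\subseteq V_2$ cannot be treated by the same formula with roles swapped.

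The paper closes exactly this hole with a minimal-violating-set form of Hall's theorem (its Lemma on Hall's condition): if no perfect matching exists, there is a violating set $S$ in $V_1$ \emph{or} in $V_2$ with $|S|\leq mn/2$, $|S|=|N(S)|+1$, and every vertex of $N(S)$ having at least two neighbours in $S$. Your computation then only needs to cover $|T|\leq mn/2$, where it works, but a genuinely different second case must be added for small violating sets $S\subseteq V_1$: there the relevant bound is $\binom{mn}{s}\binom{mn}{s-1}\,\mathbb{P}(S^s_{\geq 2})^{s-1}\,\mathbb{P}(S^s_0)^{mn-s+1}$, i.e.\ one uses the fact that all $mn-s+1$ vertices of $V_2\setminus N(S)$ must avoid $S$, together with the ``at least two neighbours'' condition to control the combinatorial factor. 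The isolated-$V_1$-vertex subcase of that analysis is what actually produces the dominant $(mn)^{1-\tau}=(mn)^{-0.94}$ rate. Without this two-sided decomposition the argument does not close, so as written the proposal has a genuine gap.
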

\begin{proof}
	Here we sketch the proof. Details can be seen in the supplementary material. The basic technique is to utilize Hall's theorem to show that such a probability is lower bounded by the probability that $G$ does not contain a structure $S\subset V_1$ or $S\subset V_2$ such that $|S|>|N(S)|$, where $N(S)$ is the neighboring set of $S$. We show that, if $S\subset V_1$, the probability that $S$ exists is upper bounded by 
	\begin{equation*}
	\sum\limits_{s=\Theta(1)}\frac{1}{(mn)^{0.94s}}+\sum\limits_{s=\Omega(1)}^{s=o(mn)} \left(\frac{s}{mn}\right)^{0.94s}+\sum\limits_{s=\Theta(mn)}c_1^{mn}.
	\end{equation*}
	If $S\subset V_2$, the probability that $S$ exists is upper bounded by 
	\begin{equation*}
	\sum\limits_{s=\Theta(1)}\frac{1}{mn}+\sum\limits_{s=o(mn)} \frac{sc_2^s}{mn}+\sum\limits_{s=\Theta(mn)}c_3^{mn}.
	\end{equation*}
	where the constants $c_1,c_2,c_3$ are strictly less than 1. Combining these results together, gives us Theorem~\ref{thm:fullrank}.
\end{proof}
Analyzing the existence of perfect matching in a random bipartite graph has been developed since~\cite{erdos1964random}. However, existing analysis is limited to the independent generation model. For example, the Erdos-Renyi model assumes each edge exists independently with probability $p$. The $\kappa-$out model~\cite{walkup1980matchings} assumes each vertex $v\in V_1$ independently and randomly chooses $\kappa$ neighbors in $V_2$. The minimum degree model~\cite{frieze2004perfect} assumes each vertex has a minimum degree and edges are uniformly distributed among all allowable classes.
There exists no work in analyzing such a probability in a random bipartite graph generated by a given degree distribution. In this case, one technical difficulty is that each node in the partition $V_1$ is dependent. All analysis should be carried from the nodes of the right partition, which exhibits an intrinsic complicated statistical model.

\subsection{Optimality of  Recovery Threshold\label{sec:recovery}}

We now focus on quantitatively analyzing the impact of the recovery threshold $K$ on the peeling decoding process and the number of rooting steps (\ref{eq:borrow}). Intuitively, a larger $K$ implies a larger number of ripples, which leads to higher successful peeling decoding probability and therefore less number of rooting steps. The key question is: how large must $K$ be such that all $mn$ blocks are recovered with only constant number of rooting steps. To answer this question, we first define a distribution generation function of $P_w$ as
\begin{equation}\label{eq:generationfunction}
\Omega_w(x)=\frac{\tau}{mn}x+\frac{\tau}{70}x^2+\tau\sum\limits_{k=3}^{mn}\frac{x^k}{k(k-1)}.
\end{equation}
The following technical lemma is useful in our analysis.

\begin{lemma}\label{lm:decode}
	If the degree distribution $\Omega_w(x)$ and recovery threshold $K$ satisfy
	\begin{equation}\label{eq:decode}
	\left[1-\Omega_w'(1-x)/mn\right]^{K-1}\leq x, \text{ for } x\in[b/mn,1],
	\end{equation}
	then the peeling decoding process in Algorithm~\ref{alg:spcode} can recover $mn-b$ blocks with probability at least $1-e^{-cmn}$, where $b, c$ are constants.
\end{lemma}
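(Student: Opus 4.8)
The plan is to analyze the peeling decoder via the standard ``and-or tree'' / density-evolution argument used for LT codes and fountain codes (Luby et al.), adapted to our finite block length $mn$ and the requirement that only $mn-b$ (rather than all) blocks need be recovered. First I would set up the iterative process: track $x_i$, the fraction of undecoded blocks (left nodes) remaining after iteration $i$, starting from $x_0=1$. A right node (coded task) of degree $d$ becomes a ripple in the next round precisely when exactly one of its $d$ neighbors is still undecoded; given the edge-perspective degree distribution induced by $\Omega_w$, the probability that a given edge is ``revealed'' (i.e. its right endpoint becomes degree-one) when a fraction $x$ of left nodes remain undecoded is governed by $\Omega_w'(1-x)/\Omega_w'(1)$. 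Since here each of the $K$ collected workers contributes a row and the left degree concentrates, the expected fraction of left nodes that can still be peeled after one round satisfies the recursion whose fixed-point inequality is exactly \eqref{eq:decode}: $[1-\Omega_w'(1-x)/mn]^{K-1}\le x$. I would derive this recursion carefully, being explicit about why the exponent is $K-1$ (a left node of the $K$ rows, conditioned on being connected to a fixed block, has $K-1$ other independent chances to be the unique ripple edge).

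Next I would establish the concentration step: the actual trajectory of the peeling process stays close to its expected trajectory. Here I would invoke a Doob martingale / bounded-differences (Azuma–Hoeffding) argument, exactly as in Luby–Mitzenmacher–Shokrollahi–Spielman. Exposing the decoding one edge at a time, each exposure changes the count of undecoded blocks by $O(1)$ in expectation with bounded influence (the maximum degree is capped at $mn$, and the Wave Soliton distribution has bounded expected degree $\Theta(\ln(mn))$, so the relevant second moments are controlled), which yields a deviation bound of the form $e^{-c'mn\epsilon^2}$ over the $\Theta(mn)$ steps. Combining the expected-trajectory inequality \eqref{eq:decode} with this concentration gives that, with probability at least $1-e^{-cmn}$, the process does not stall before the fraction of undecoded blocks drops to $b/mn$, i.e. until only $b$ blocks remain — which is the claim. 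I would handle the very end of the process (when $O(1)$ blocks remain) separately, since concentration is vacuous there; this is precisely why the statement only guarantees $mn-b$ blocks and leaves the last $b$ to the rooting step.

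The main obstacle, and the place I would spend the most care, is that the edges of our bipartite graph are \emph{not} generated independently in the way the classical LT analysis assumes: the $mn$ left-degree values are dependent because each right node picks a \emph{set} of $l$ distinct neighbors (sampling without replacement) according to $P_w$, rather than including each potential edge independently. So I would need to argue that, for $K=\Theta(mn)$, the left-degree distribution is close (in total variation, or via a Poissonization/coupling argument) to the independent model, and that this discrepancy contributes only lower-order terms to the recursion — much as the $\kappa$-out versus Erdős–Rényi comparison is handled in matching theory. A secondary technical point is verifying that the specific modification in the Wave Soliton distribution (capping at $mn$, shaving weight off degree $2$) still makes \eqref{eq:decode} satisfiable for $K=\Theta(mn)$ with a constant $b$; I would check this by analyzing $\Omega_w'(1-x)/mn$ near $x=1$ (degree-one and degree-two terms dominate and give the initial ripple supply) and for $x$ bounded away from $0$ (the $\sum x^k/(k(k-1))$ tail behaves like $-(1-x)\ln(1-x)$, reproducing the Soliton-type balance), deferring the routine inequality-chasing to the supplementary material.
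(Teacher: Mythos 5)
Your proposal follows essentially the same route as the paper: the paper likewise reduces the claim to the Luby et al.\ and-or-tree condition $\lambda(1-\rho(1-x))<x$ (invoking the martingale concentration result of~\cite{luby2001efficient} as a black box rather than re-deriving Azuma), computes $\rho(x)=\Omega_w'(x)/\Omega_w'(1)$ and $\lambda(x)=\left[1-\Omega_w'(1)(1-x)/mn\right]^{K-1}$, and sets $\delta=b/mn$ to obtain exactly the inequality (\ref{eq:decode}). The dependence issue you flag resolves more simply than you anticipate: for a fixed block, the $K$ workers choose their neighbor sets independently of one another, so that block's degree is exactly $\mathrm{Binomial}\left(K,\Omega_w'(1)/mn\right)$ and no Poissonization or coupling argument is required for the left edge-degree distribution.
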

Lemma~\ref{lm:decode} is tailored from applying a martingale argument to the peeling decoding process~\cite{luby2001efficient}.  This result provides a quantitative recovery condition on the degree generation function. It remains to be shown that the proposed Wave Soliton distribution (\ref{eq:generationfunction}) satisfies the above inequality with a specific choice of $K$.
\begin{theorem}\label{thm:decode}\emph{(Recovery threshold)} Given the sparse code with parameter$(P_w,[m^2n^2])$, if $K=\Theta(mn)$, then there exists a constant $c$ such that with probability at least $1-e^{-cmn}$, Algorithm~\ref{alg:spcode} is sufficient to recover all $mn$ blocks with $\Theta(1)$ blocks recovering from rooting step (\ref{eq:borrow}).
\end{theorem}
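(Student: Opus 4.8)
The plan is to verify that the Wave Soliton generating function $\Omega_w(x)$ satisfies the recovery inequality (\ref{eq:decode}) of Lemma~\ref{lm:decode} for a suitable $K=\Theta(mn)$, and then to invoke Lemma~\ref{lm:decode} together with the full-rank guarantee of Theorem~\ref{thm:fullrank} to conclude. First I would compute $\Omega_w'(x)$ explicitly: differentiating (\ref{eq:generationfunction}) term by term gives $\Omega_w'(x)=\frac{\tau}{mn}+\frac{2\tau}{70}x+\tau\sum_{k=3}^{mn}\frac{x^{k-1}}{k-1}$, and the tail sum $\sum_{k=3}^{mn}\frac{x^{k-1}}{k-1}=\sum_{j=2}^{mn-1}\frac{x^{j}}{j}$ is a truncation of $-\ln(1-x)-x$. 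Thus for $x$ bounded away from $1$ we have $\Omega_w'(1-x)\approx \tau(-\ln x - (1-x)) + O(1/mn) + \frac{2\tau}{70}(1-x)$, so $1-\Omega_w'(1-x)/mn \le 1 - \frac{\tau(-\ln x)}{mn} + O(1/mn)$ once the lower-order terms are absorbed. Raising to the power $K-1$ and using $1-y\le e^{-y}$, the left side of (\ref{eq:decode}) is at most $\exp\big(-(K-1)\tau(-\ln x)/mn + O(K/mn)\big) = x^{(K-1)\tau/mn}\cdot e^{O(K/mn)}$. Choosing $K = c_0 mn$ with $c_0$ a large enough constant (so that $(K-1)\tau/mn > 1$ and the constant factor $e^{O(K/mn)}$ is dominated for $x \ge b/mn$) makes this $\le x$ on the required interval $[b/mn,1]$; the endpoint $x$ near $1$ needs a separate elementary check because the logarithmic approximation degrades there, but near $x=1$ the left side is close to $0$ while the right side is close to $1$, so the inequality holds trivially.

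Having established (\ref{eq:decode}), Lemma~\ref{lm:decode} gives that with probability at least $1-e^{-cmn}$ the peeling process recovers all but $b=\Theta(1)$ of the $mn$ blocks. The remaining $b$ blocks are then recovered one at a time by the rooting step (\ref{eq:borrow}); each such step is valid precisely because $\mathrm{rank}(M)=mn$, which is guaranteed with probability $1-O((mn)^{-0.94})$ by combining Theorem~\ref{thm:fullrank} with the Schwartz–Zippel argument sketched before it (here one must note $|S|=m^2n^2=(mn)^2$ matches the degree-squared requirement of the Schwartz–Zippel Lemma, since $|M(x)|$ has degree $mn$). A union bound over these two failure events — peeling failure with probability $e^{-cmn}$ and rank deficiency with probability $O((mn)^{-0.94})$ — shows that with probability at least $1-e^{-c'mn}$ (the polynomially small term being absorbed into a slightly worse but still positive constant, or simply stated as $1-O((mn)^{-0.94})$) Algorithm~\ref{alg:spcode} recovers all $mn$ blocks using only $\Theta(1)$ rooting steps, which is the claim. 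I would also remark that the constant number of rooting steps keeps the decoding complexity bound (\ref{eq:timecomplexity}) at $O(nnz(C)\ln(mn))$ since $c=\Theta(1)$, $K=\Theta(mn)$, and $\alpha=\Theta(\ln(mn))$ is the expected degree under $P_w$ (the tail $\sum_k \tau/(k-1) \sim \tau \ln(mn)$).

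The main obstacle I anticipate is the precise handling of the interval $[b/mn, 1]$ in verifying (\ref{eq:decode}): the clean logarithmic approximation $\Omega_w'(1-x) \approx -\tau\ln x$ is only accurate in the regime where $x$ is neither too close to $0$ (where the truncation of the series at degree $mn-1$ matters, which is exactly why the distribution is capped) nor too close to $1$ (where $-\ln x$ is small and the $O(1/mn)$ additive terms and the modified degree-$2$ coefficient become relatively significant). One must partition $[b/mn,1]$ into a "bulk" region where the power-of-$x$ bound works with room to spare, a region near $x=1$ handled by crude bounds, and verify that the modification of the Soliton distribution at degree $2$ (the $\tau/70$ coefficient rather than $\tau/2$) does not break the inequality — indeed this is the purpose of the "Wave" modification, since it ensures a steady supply of ripples throughout the process rather than front-loading degree-$2$ nodes. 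Getting the constant $c_0$ in $K=c_0 mn$ explicit enough to close the inequality uniformly is the delicate bookkeeping; I would relegate the full case analysis to the supplementary material and present only the dominant estimate in the body.
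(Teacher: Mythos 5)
Your route is the same as the paper's: verify condition (\ref{eq:decode}) for $\Omega_w$ with $K=\Theta(mn)$ by writing $\Omega_w'$ as a truncation of $-\tau\ln(1-\cdot)$ plus lower-order terms, bound $[1-y/mn]^{K-1}$ by $e^{-(K-1)y/mn}$, and then invoke Lemma~\ref{lm:decode} for the peeling phase and the rooting step (justified by full rank) for the residual $b=\Theta(1)$ blocks. However, there is a genuine error at the one point where the verification is actually delicate. You dismiss the regime $x\to 1$ by asserting that ``the left side is close to $0$ while the right side is close to $1$.'' This is backwards: as $x\to 1$ the argument $1-x$ of $\Omega_w'$ tends to $0$, and $\Omega_w'(y)\approx \tau/mn+(\tau/35)y$ for small $y$, so the left-hand side $[1-\Omega_w'(1-x)/mn]^{K-1}\approx 1-(K-1)\bigl(\tau/mn+(\tau/35)(1-x)\bigr)/mn$ is close to $1$, and the inequality $\le x=1-(1-x)$ is \emph{tight}, not trivial. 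It is precisely here that the constant in $K=c_0\,mn$ is pinned down (one needs roughly $c_0\tau/35\ge 1$), and it is exactly where the Wave modification at degree $2$ (the $\tau/70$ mass, hence the $\tau x/35$ term in $\Omega_w'$) enters; the paper's proof isolates this by reducing to $e^{-c\Omega_w'(x)}\le 1-x$ and keeping the $34\tau x/35$ and $\tau/d$ terms explicitly. The regime you should treat as trivial is the opposite endpoint $x=\Theta(1/mn)$, where the left-hand side is $\approx (b/mn)^{c_0\tau}\ll b/mn$.

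A second, smaller issue: in your concluding union bound you suggest the $O((mn)^{-0.94})$ rank-deficiency probability can be ``absorbed'' into $1-e^{-c'mn}$. A polynomially decaying failure probability cannot be dominated by an exponentially decaying one, so the combined guarantee is only $1-O((mn)^{-0.94})$. The paper avoids this by scoping Theorem~\ref{thm:decode} to the peeling/rooting count alone (conditioning on rank$(M)=mn$, which is what Algorithm~\ref{alg:spcode} waits for) and combining with Theorem~\ref{thm:fullrank} only in the informal ``high probability'' statement of Theorem~\ref{thm:main}; your hedged alternative phrasing is the correct one to keep.
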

Combining the results of Theorem~\ref{thm:fullrank} and Theorem~\ref{thm:decode}, we conclude that the recovery threshold of sparse code $(P_w,[m^2n^2])$ is $\Theta(mn)$ with high probability. Moreover, since the average degree of Wave Soliton Distribution is $O(\ln(mn))$, combining these results with (\ref{eq:timecomplexity}), the complexity of Algorithm~\ref{alg:spcode} is therefore $O(nnz(C)\ln(mn))$. 
\begin{remark}
	Although the recovery threshold of the proposed scheme exhibits a constant gap to the information theoretical lower bound, the practical performance is very close to such a bound. This mainly comes from the pessimistic estimation in Theorem~\ref{thm:decode}. As illustrated in Figure~\ref{fig:theorypractice}, we generate the sparse code under Robust Soliton distribution, and plot the average recovery threshold versus the number of blocks $mn$. It can be observed that the overhead of proposed sparse code is less than $15\%$.
\end{remark}
\begin{remark}
	Existing codes such as Tornado code~\cite{luby1998tornado} and LT code~\cite{luby2002lt} also utilize the peeling decoding algorithm and can provide a recovery threshold $\Theta(mn)$. However, they exhibit a large constant, especially when $mn$ is less than $10^3$. Figure~\ref{fig:theorypractice} compares the practical recovery threshold among these codes. We can see that our proposed sparse code results in a much lower recovery threshold. Moreover, the intrinsic cascading structure of these codes will also destroy the sparsity of input matrices.
\end{remark}

\subsection{Optimal Design of Sparse Code }

The proposed Wave Soliton distribution (\ref{eq:wave}) is asymptotically optimal, however, it is far from optimal in practice when $m,n$ is small. The analysis of the full rank probability and the decoding process relies on an asymptotic argument to enable upper bounds of error probability. Such bounds are far from tight when $m, n$ are small. In this subsection, we focus on determining the optimal degree distribution based on our analysis in Section~\ref{sec:fullrank}  and~\ref{sec:recovery}. Formally, we can formulate the following optimization problem.
\begin{align}
&\min\quad \sum\limits_{k=1}^{mn} kp_k\label{eq:optimization}\\
& \begin{array}{r@{\quad}l@{}l@{\quad}l}
s.t. & \mathbb{P}(M\text{ is full rank})> p_c,\notag\\
&\left[1-\frac{\Omega_w'(x)}{mn}\right]^{mn+c}\leq 1-x-c_0\sqrt{\frac{1-x}{mn}},\notag\\
& x\in\left[0,1-b/mn\right], [p_k]\in \Delta_{mn},\notag
\end{array}
\end{align}
The objective is to minimize the average degree, namely, to minimize the computation and communication overhead at each worker. The first constraint represents that the probability of full rank is at least $p_c$. Since it is difficult to obtain the exact form of such a probability, we can use the analysis in Section~\ref{sec:fullrank} to replace this condition by requiring the probability that the balanced bipartite graph $G(V_1,V_2,P)$ contains a perfect matching is larger than a given threshold, which can be calculated exactly.
The second inequality represents the decodability condition that when $K=mn+c+1$ results are received, $mn-b$ blocks are recovered through the peeling decoding process and $b$ blocks are recovered from the rooting step (\ref{eq:borrow}). This  condition is modified from (\ref{eq:decode}) by adding an additional term, which is useful in increasing the expected ripple size~\cite{shokrollahi2006raptor}.  By discretizing the interval $[0,1-b/mn]$ and requiring the above inequality to hold on the discretization points, we obtain a set of linear inequalities constraints. Details regarding the exact form of the above optimization model and solutions are provided in the supplementary material.

\begin{figure}[t]
	\vskip -0.1in
	\begin{center}
		\centerline{\includegraphics[width=3.2in]{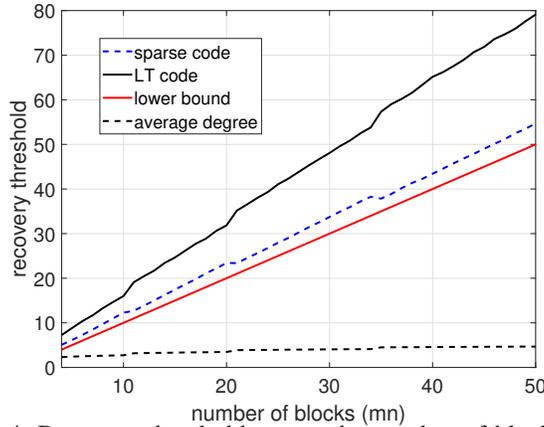}}
		\vskip -0.15in
		\caption{Recovery threshold versus the number of blocks $mn$.}
		\vskip -0.4in
		\label{fig:theorypractice}
	\end{center}
\end{figure}

\section{Experimental Results}

In this section, we present experimental results at Ohio Super Computing Center~\cite{OhioSupercomputerCenter1987}. We compare our proposed coding scheme against the following schemes: (i) \textbf{uncoded scheme}: the input matrices are divided uniformly across all workers and the master waits for all workers to send their results; (ii)  \textbf{sparse MDS code}~\cite{lee2017coded}: the generator matrix is a sparse random Bernoulli matrix with average computation overhead $\Theta(\ln(mn))$,  recovery threshold of $\Theta(mn)$ and decoding complexity $\tilde{O}(mn\cdot nnz(C))$. (iii) \textbf{product code}~\cite{lee2017high}: two-layer MDS code that can achieves the probabilistic recovery threshold of $\Theta(mn)$ and decoding complexity $\tilde{O}(rt)$. We use the above sparse MDS code to ensemble the product code to reduce the computation overhead. (iv) \textbf{polynomial code}~\cite{yu2017polynomial}: coded matrix multiplication scheme with optimum recovery threshold; (v) \textbf{LT code}~\cite{luby2002lt}: rateless code widely used in broadcast communication. It has low decoding complexity due to the peeling decoding algorithm. To simulate straggler effects in large-scale system, we randomly pick number of $s$ workers that are running a background thread which increases the computation time.

We implement all methods in python using MPI4py. To simplify the simulation, we fix the number of workers $N$ and randomly generate a coefficient matrix $M\in\mathbb{R}^{N\times mn}$ under given degree distribution offline such that it can resist one straggler. Then, each worker loads a certain number of partitions of input matrices according to the coefficient matrix $M$. In the computation stage, each worker computes the product of their assigned submatrices and returns the results using \texttt{Isend()}. Then the master node actively listens to the responses from each worker via \texttt{Irecv()}, and uses \texttt{Waitany()} to keep polling for the earliest finished tasks. Upon receiving enough results, the master stops listening and starts decoding the results.

\begin{figure}[htb]
	\vskip -0.15in
	\begin{center}
		\centerline{\includegraphics[width=5in]{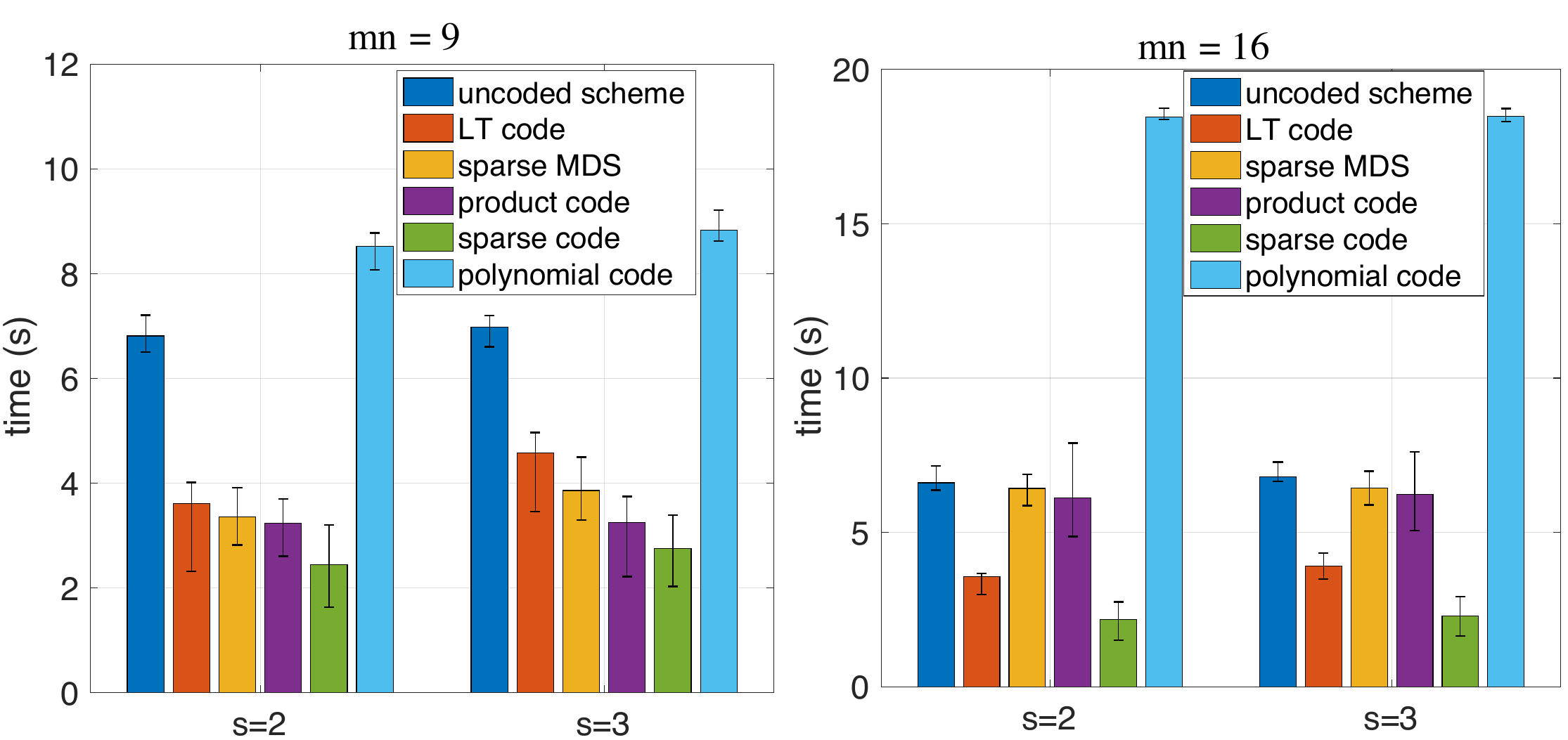}}
		\vskip -0.05in
		\caption{Job completion time for two $1.5$E$5\times 1.5$E$5$ matrices with $6$E$5$ nonzero elements. }
		\label{fig:simres_tot}
	\end{center}
	\vskip -0.40in
\end{figure}

\begin{figure*}[t]
	\vskip -0.1in
	\begin{center}
		\centerline{\includegraphics[width=5.1in]{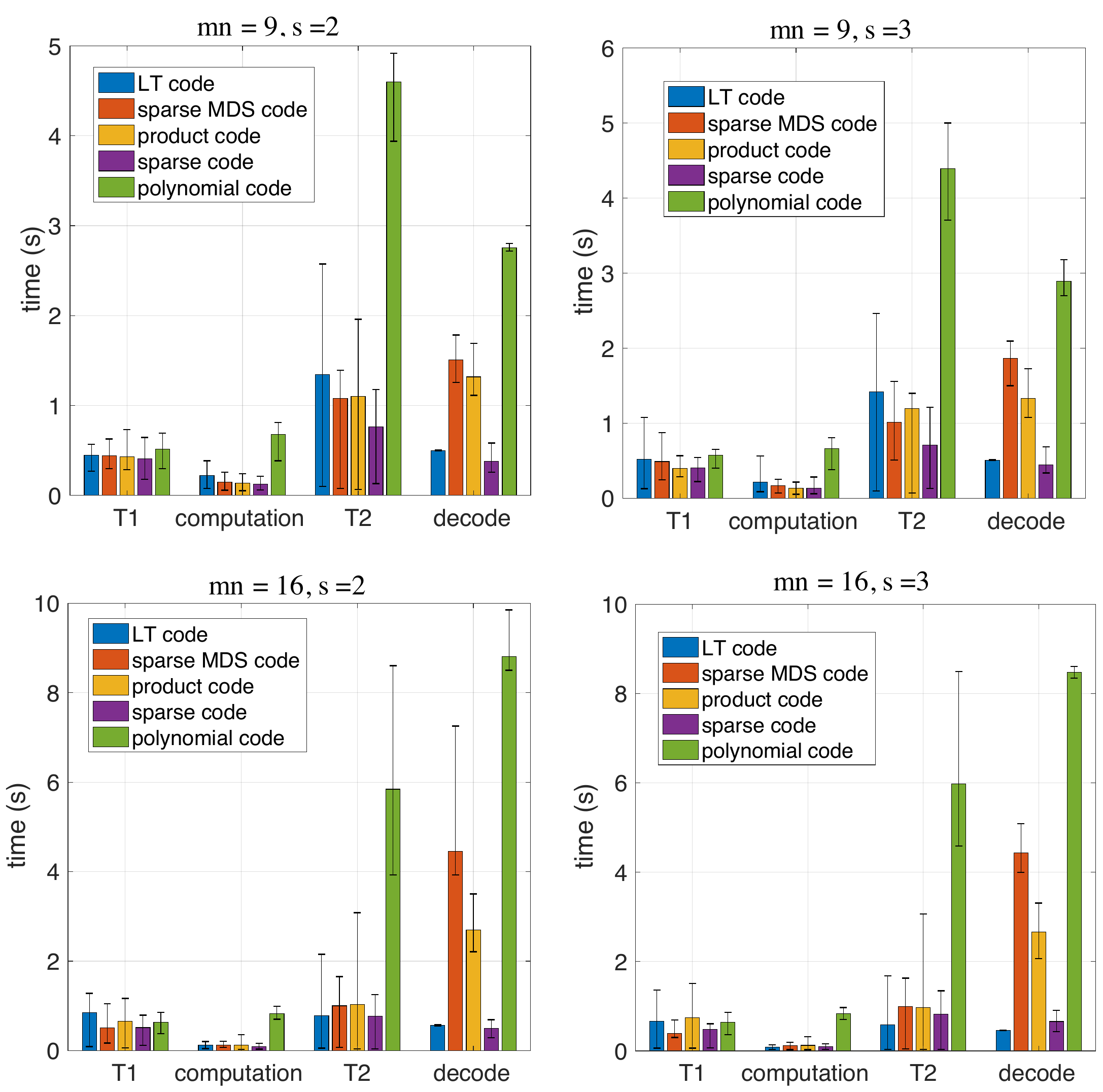}}
		\vskip -0.05in
		\caption{Simulation results for two $1.5$E$5\times 1.5$E$5$ matrices with $6$E$5$ nonzero elements. T1 and T2 are the transmission times from master to worker, and the worker to master, respectively. }
		\label{fig:simres_type}
	\end{center}
	\vskip -0.40in
\end{figure*}

We first generate two random Bernoulli sparse matrices with $r=s=t=150000$ and $600000$ nonzero elements. Figure.~\ref{fig:simres_tot} reports the job completion time under $m=n=3$, $m=n=4$ and number of stragglers $s=2,3$, based on $20$ experimental runs. It can be observed that our proposed sparse code requires the minimum time, and outperforms LT code (in 20-30\% the time), sparse MDS code and product code (in 30-50\% the time) and polynomial code (in 15-20\% the time). The uncoded scheme is faster than the polynomial code. The main reason is that, due to the increased number of nonzero elements of coded matrices, the per-worker computation time for these codes is increased. Moreover, the data transmission time is also greatly increased, which leads to additional I/O contention at the master node. 

We further compare our proposed sparse code with the existing schemes from the point of view of the time required to communicate inputs to each worker, compute the matrix multiplication in parallel, fetch the required outputs, and decode. As shown in Figure.~\ref{fig:simres_type}, in all of these component times, the sparse code outperforms the product code and polynomial code, with the effects being more pronounced for the transmission time and the decoding time. Moreover, due to the efficiency of proposed  hybrid decoding algorithm, our scheme achieves much less decoding time compared to the sparse MDS code and product code. Compared to the LT code, our scheme has lower transmission time because of the much lower recovery threshold. For example, when $m=n=4$ and $s=2$, the proposed sparse code requires $18$ workers, however, the LT code requires $24$ workers in average.

\begin{table*}[t]
	
	\caption{Data Statistics for Different Input Matrix }
	\label{tab:data}
	\begin{center}
		\begin{tabular}{|c|c|c|c|c|c|c|}
			\hline
			Data & $r$ & $s$ & $t$ & $nnz(A)$ &$nnz(B)$ & $nnz(C)$  \\
			\hline
			square & 1.5E5 & 1.5E5  & 1.5E5  & 6E5 & 6E5 & 2.4E6 \\
			\hline
			tall & 3E5 & 1.5E5  & 3E6 & 6E5 & 6E5  & 2.4E6  \\
			\hline
			fat &  1.5E5 & 3E5&  1.5E5  & 6E5  & 6E5 & 1.2E6  \\
			\hline
			amazon-08 / web-google &  735320 & 735323 & 916428  & 5158379 & 4101329 & 28679400 \\
			\hline
			cont1 /  cont11 & 1918396 & 1468599 & 1961392 &  2592597 & 5382995 & 10254724   \\
			\hline
			cit-patents / patents & 3774768 & 3774768 & 3774768 & 16518948 & 14970767 & 64796579  \\
			\hline
			hugetrace-00 / -01 & 4588484  &4588484  & 12057440 & 13758266 & 13763443 & 38255405  \\
			\hline
		\end{tabular}
		
	\end{center}
	\vskip -0.25in
\end{table*} 

\begin{table*}[t]
	\caption{Timing Results for Different Sparse Matrix Multiplications (in sec)}
	\label{tab:mainresult}
	\begin{center}
		\begin{tabular}{|c|c|c|c|c|c|c|c|c|c|}
			\hline
			Data &  uncoded & LT code & sparse MDS & product code & polynomial  & sparse code \\
			\hline
			square &  6.81 & 3.91  & 6.42  & 6.11 & 18.44 & \textbf{2.17} \\
			\hline
			tall & 7.03 & 2.69  & 6.25 & 5.50 & 18.51  & \textbf{2.04} \\
			\hline
			fat &  6.49 & 1.36 &  3.89  & 3.08  & 9.02 & \textbf{1.22} \\
			\hline
			amazon-08 / web-google &  15.35 & 17.59 & 46.26 & 38.61 & 161.6 & \textbf{11.01}\\
			\hline
			cont1 /  cont11 & 7.05 & 5.54 & 9.32 &  14.66 & 61.47 & \textbf{3.23} \\
			\hline
			cit-patents / patents & 22.10 & 29.15 & 69.86 & 56.59 & 1592 & \textbf{21.07} \\
			\hline
			hugetrace-00 / -01 & 18.06  & 21.76  & 51.15 & 37.36 & 951.3 & \textbf{14.16}  \\
			\hline
		\end{tabular}
		
	\end{center}
	\vskip -0.25in
\end{table*} 

We finally compare our scheme with these schemes for other type of matrices and larger matrices. The data statistics are listed in Table~\ref{tab:mainresult}. The first three data sets \texttt{square}, \texttt{tall} and \texttt{fat} are randomly generated square, fat and tall matrices. We also consider $8$ sparse matrices from real data sets~\cite{davis2011university}. We evenly divide each input matrices into $m=n=4$ submatrices and  number of stragglers is equal to $2$. We match the column dimension of $A$ and row dimension of $B$ using the smaller one. The timing results are results averaged over $20$ experimental runs. Among all experiments, we can observe in Table~\ref{tab:mainresult} that our proposed sparse code speeds up $1-3\times$ of uncoded scheme and outperforms the existing codes, with the effects being more pronounced for the real data sets. The job completion of the LT code, random sparse, product code is smaller than uncoded scheme in \texttt{square}, \texttt{tall} and \texttt{fat} matrix and  larger than uncoded scheme in those real data sets. 

\section{Conclusion}

In this paper, we proposed a new coded matrix multiplication scheme, which achieves near optimal recovery threshold, low computation overhead, and decoding time linear in the number of nonzero elements. Both theoretical and simulation results exhibit order-wise improvement of the proposed sparse code compared with the existing schemes. In the future, we will extend this idea to the case of higher-dimensional linear operations such as tensor operations.

\bibliographystyle{IEEEtran}
\bibliography{example_paper}

\appendix

\subsection{Proof of Theorem~\ref{thm:fullrank}}

Before we presenting the main proof idea, we first analyze the moment characteristic of our proposed Wave Soliton Distribution. For simplicity, we use $d$ to denote $mn$ in the sequel.
\begin{lemma}\label{lm:moment}
	Let a random variable $X$ follows the following Wave Soliton distribution $P_w=[p_1,p_2,\ldots,p_d]$.
	\begin{equation}
	p_k=\left\{
	\begin{aligned}
	&\frac{\tau}{d}, k=1\\
	&\frac{\tau}{70}, k=2\\
	&\frac{\tau}{k(k-1)}, 3\leq k\leq d
	\end{aligned}
	\right..
	\end{equation}
	Then all orders moment of $X$ is given by
	\begin{equation}
	\mathbb{E}[X^s]=\left\{
	\begin{aligned}
	&\Theta\left(\tau\ln\left(d\right)\right), s=1 \\
	&\Theta\left(\frac{\tau}{s}d^{s-1}\right), s\geq 2.
	\end{aligned}
	\right..
	\end{equation}
	\begin{proof}
		Based on the definition of moment for discrete random variable, we have
		\begin{align}
		\mathbb{E}[X]&=\sum\limits_{k=1}^dkp_k=\frac{\tau}{d}+\frac{\tau}{35}+\sum\limits_{k=3}^{d}\frac{\tau}{k-1}=\Theta\left(\tau\ln\left(d\right)\right).
		\end{align}
		Note that in the last step, we use the fact that $1+1/2+\cdots+1/d=\Theta(\ln(d))$.
		\begin{align}
		\mathbb{E}[X^s]&=\sum\limits_{k=1}^dk^sp_k=\frac{\tau}{d}+\frac{\tau 2^s}{70}+\sum\limits_{k=3}^{d}\frac{\tau k^{s-1}}{k-1}\notag\\
		&\overset{(a)}{=}\Theta\left(\frac{\tau}{s}d^{s-1}\right).
		\end{align}
		The step (a) uses the Faulhaber's formula that $\sum_{k=1}^{d}k^s=\Theta(d^{s+1}/(s+1))$ .
	\end{proof}
\end{lemma}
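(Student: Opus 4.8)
The plan is to compute $\mathbb{E}[X^s] = \sum_{k=1}^{d} k^s p_k$ directly, splitting the sum into the three pieces dictated by the piecewise definition of $P_w$ (the atoms at $k=1$ and $k=2$, and the polynomial tail $3\le k\le d$), and then identifying which piece dominates asymptotically. The only external facts I would need are the power-sum estimate $\sum_{k=1}^{d} k^{a} = \Theta(d^{a+1}/(a+1))$ for $a\ge 0$ (Faulhaber, or just comparison with $\int_0^d x^a\,dx$) and the harmonic estimate $\sum_{k=1}^d 1/k = \Theta(\ln d)$.

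For $s=1$ I would observe that the tail contributes $\sum_{k=3}^{d} k\cdot \frac{\tau}{k(k-1)} = \tau\sum_{k=3}^{d}\frac{1}{k-1}$, a shifted harmonic sum of size $\Theta(\tau\ln d)$, while the two atoms at $k=1,2$ contribute only $\tau/d + \tau/35 = \Theta(1)$ and are therefore absorbed; this gives $\mathbb{E}[X] = \Theta(\tau\ln d)$.

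For $s\ge 2$ I would estimate the tail $\tau\sum_{k=3}^{d} \frac{k^{s-1}}{k-1}$ by sandwiching $\frac1{k-1}$ between $\frac1k$ and $\frac2k$ (valid for $k\ge 2$), which reduces it to $\Theta\!\big(\tau\sum_{k=3}^d k^{s-2}\big) = \Theta(\tau d^{s-1}/(s-1)) = \Theta(\tau d^{s-1}/s)$, the last step being harmless since $s/(s-1)\in[1,2]$. Then I would check the two atoms $\tau/d$ and $2^s\tau/70$ are each $O(d^{s-1})$ and hence subsumed, yielding $\mathbb{E}[X^s]=\Theta(\tau d^{s-1}/s)$.

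The one step that requires actual care — the "main obstacle," such as it is — is controlling the exponential atom $2^s\tau/70$ coming from the modified degree-two weight: I need $2^s = O(d^{s-1})$, which holds once $d\ge 4$, since then $(s-1)\ln d \ge 2(s-1)\ln 2 \ge s\ln 2$ for every $s\ge 2$. So the lemma is implicitly a statement for $mn$ above a fixed constant. Beyond that, the whole argument is routine bookkeeping with harmonic and power sums, together with the cosmetic replacement of $1/(s-1)$ by $1/s$ inside $\Theta(\cdot)$.
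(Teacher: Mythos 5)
Your proposal is correct and follows essentially the same route as the paper's proof: split $\mathbb{E}[X^s]=\sum_k k^s p_k$ into the atoms at $k=1,2$ plus the tail, then apply the harmonic-sum estimate for $s=1$ and the power-sum (Faulhaber) estimate for $s\ge 2$. Your extra bookkeeping — sandwiching $1/(k-1)$ between $1/k$ and $2/k$, and verifying that the $2^s\tau/70$ atom is dominated once $d\ge 4$ — only makes explicit steps the paper leaves implicit.
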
 

The technical idea in the proof of this theorem is to use the Hall's theorem. Assume that the bipartite graph $G(V_1,V_2, P_w)$ does not have a perfect matching. Then by Hall's condition, there exists a violating set $S\subseteq V_1$ or  $S\subseteq V_2$ such that $|N(S)|<|S|$, where the neighboring set $N(S)$ is defined as $N(S)=\{y|(x,y)\in E(G) \text{ for some } x\in S\}$. Formally, by choosing such $S$ of smallest cardinality, one immediate consequence is the following technical statement.
\begin{lemma}\label{lm:halltheorem}
	If the bipartite graph $G(V_1,V_2, P_w)$ does not contain a perfect matching and $|V_1|=|V_1|=d$, then there exists a set $S\subseteq V_1$ or $S\subseteq V_2$ with the following properties.
	\begin{enumerate}
		\item $|S|=|N(S)|+1$.
		\item For each vertex $t\in N(S)$, there exists at least two adjacent vertices in $S$.
		\item $|S|\leq d/2$.
	\end{enumerate}
\end{lemma}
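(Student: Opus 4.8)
The plan is to derive the three stated properties as successive consequences of choosing $S$ to be a \emph{smallest} Hall-violating set, splitting into the two cases $S\subseteq V_1$ and $S\subseteq V_2$ only where the asymmetry of the generation model forces it. First I would invoke Hall's theorem in the form: if $G$ has no perfect matching then there is a violating set $S$ in $V_1$ or in $V_2$ with $|N(S)|<|S|$; fix $S$ to be one of minimum cardinality among all such violating sets (in either partition). The goal is then to show this minimal $S$ satisfies (1), (2), (3).

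For property (1), I would argue by contradiction: suppose $|S|\ge |N(S)|+2$. The idea is to remove a single vertex $v$ from $S$ and check that $S':=S\setminus\{v\}$ is still violating. Since $N(S')\subseteq N(S)$, we have $|N(S')|\le |N(S)| \le |S|-2 = |S'|-1 < |S'|$, so $S'$ is a strictly smaller violating set, contradicting minimality. Hence $|N(S)| \ge |S|-1$, and combined with $|N(S)|<|S|$ this forces $|S| = |N(S)|+1$. This argument is symmetric in the two partitions, so it needs no case split.

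For property (2), I would again use minimality: suppose some $t\in N(S)$ has exactly one neighbor $u$ in $S$ (it has at least one by definition of $N(S)$). Then consider $S' = S\setminus\{u\}$. Removing $u$ can only remove $t$ from the neighborhood — any other vertex of $N(S)$ still has a neighbor in $S'$ — so $N(S')\supseteq N(S)\setminus\{t\}$, and in fact $t\notin N(S')$ gives $N(S') = N(S)\setminus\{t\}$ when $u$ was $t$'s unique neighbor. Wait — one must be careful: other vertices of $N(S)$ might have had $u$ as their \emph{only} neighbor in $S$ too, but that is impossible for a \emph{smallest} violating $S$ by exactly the reasoning just used for $t$; more cleanly, $|N(S')| \le |N(S)| - 1 = |S| - 2 = |S'| - 1 < |S'|$, so $S'$ is violating and smaller, again a contradiction. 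So every $t\in N(S)$ has at least two neighbors in $S$. This step also does not require a case split, though I would state it carefully to make sure the counting inequality $|N(S')|\le|N(S)|-1$ genuinely holds (it does: $t$ leaves the neighborhood and nothing new enters).

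For property (3), the bound $|S|\le d/2$ is where I expect the only real subtlety, and where the two cases genuinely differ. If $S\subseteq V_2$, then every vertex of $S$ picks its neighbors in $V_1$, and $N(S)\subseteq V_1$ with $|N(S)| = |S|-1$; a short double-counting / pigeonhole argument on edges (each $s\in S$ has degree $\ge 1$, and property (2) says each of the $|S|-1$ vertices of $N(S)$ absorbs $\ge 2$ edge-endpoints) already constrains things, but to get $d/2$ one notes that if $|S|>d/2$ then $|N(S)| = |S|-1 > d/2-1$, and one can instead look at the \emph{complement} side: consider $T = V_1\setminus N(S)$, which has size $d-|S|+1$, and the vertices of $V_2\setminus S$; if $|S|$ is too large a symmetric Hall-type violating set appears on the other side of smaller size, contradicting minimality of $S$ across \emph{both} partitions. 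The case $S\subseteq V_1$ is handled by the mirror argument using that each $v\in V_2$ has degree $\ge 1$. The hard part will be making this complementation argument fully rigorous while respecting that the generation model is not symmetric between $V_1$ and $V_2$ — so I would phrase property (3) as a consequence of having chosen $S$ minimal over the union of violating sets in both partitions, and verify that the complement construction indeed yields a violating set (this uses $|N(S)|=|S|-1$ exactly, i.e. property (1), so the order of proving (1) then (3) matters). Once (1)–(3) are in hand, the lemma is complete and ready to feed the union-bound estimate over all candidate $S$ of size $s$ with $\Theta(1)\le s\le d/2$ in the main proof of Theorem~\ref{thm:fullrank}.
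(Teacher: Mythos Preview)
Your approach is the standard one and is correct. The paper does not actually prove this lemma: it simply states that the three properties are an ``immediate consequence'' of choosing a Hall-violating set $S$ of smallest cardinality, which is precisely the strategy you sketch.

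Two small remarks. First, the lemma is a purely combinatorial statement about any balanced bipartite graph lacking a perfect matching; the random generation model plays no role here. Your worry about ``respecting that the generation model is not symmetric between $V_1$ and $V_2$'' is misplaced --- the only case split in the complementation argument for (3) is the graph-theoretic one between the two sides, and no probability enters until the subsequent union bound in Theorem~\ref{thm:fullrank}. Second, your complementation idea for (3) is exactly right and can be stated more crisply: if $S\subseteq V_1$ is minimal with $|S|=|N(S)|+1$, set $T:=V_2\setminus N(S)$. Any vertex of $T$ with a neighbor in $S$ would lie in $N(S)$, so $N(T)\subseteq V_1\setminus S$, whence $|N(T)|\le d-|S|<d-|S|+1=|T|$ and $T$ is violating on the other side. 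Since $|S|+|T|=d+1$, the smaller of the two has size at most $\lceil d/2\rceil$; taking a minimal violating subset of that smaller one (which preserves properties (1) and (2) by your earlier arguments) finishes the job. Strictly this yields $|S|\le\lceil d/2\rceil$ rather than $|S|\le d/2$ when $d$ is odd, but the off-by-one is immaterial for the asymptotic union bound that follows.
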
 
Figure~\ref{fig:example} illustrates two simple examples of structure $S$ satisfying above three conditions. 

\textbf{Case 1:} We consider that $S\subseteq V_1$. Define an event $E(V_1)$ is that there exists a set $S\subseteq V_1$ satisfying above three conditions. 

\textbf{Case 1.1:} We consider $S\subseteq V_1$ and $|S|=1$. 

In this case, we have $|N(S)|=0$ and need to estimate the probability that there exists one isolated vertex in partition $V_1$.  Let random variable $X_i$ be the indicator function of the event that vertex $v_i$ is isolated. Then we have the probability that
\begin{equation*}
\mathbb{P}(X_i=1)=\left(1-\frac{\alpha}{d}\right)^d,
\end{equation*}
where $\alpha$ is the average degree of a node in the partition $V_2$ and $\alpha=\Theta\left(\tau\ln\left(d\right)\right)$ from Lemma~\ref{lm:moment}. Let $X$ be the total number of isolated vertices in partition $V_1$. Then we have
\begin{align}
\mathbb{E}[X] = {E}\left[\sum\limits_{i=1}^{d}X_i\right]=d\left(1-\frac{\alpha}{d}\right)^d=d\left(1-\frac{\tau\ln(d)}{d}\right)^d=\Theta\left(\frac{1}{d^{\tau-1}}\right)\overset{(a)}{=}o(1).
\end{align}
The above, step (a) is based on the fact that $\tau>1.94$.

\begin{figure*}[t]
	\vskip 0.1in
	\begin{center}
		\centerline{\includegraphics[width=5in]{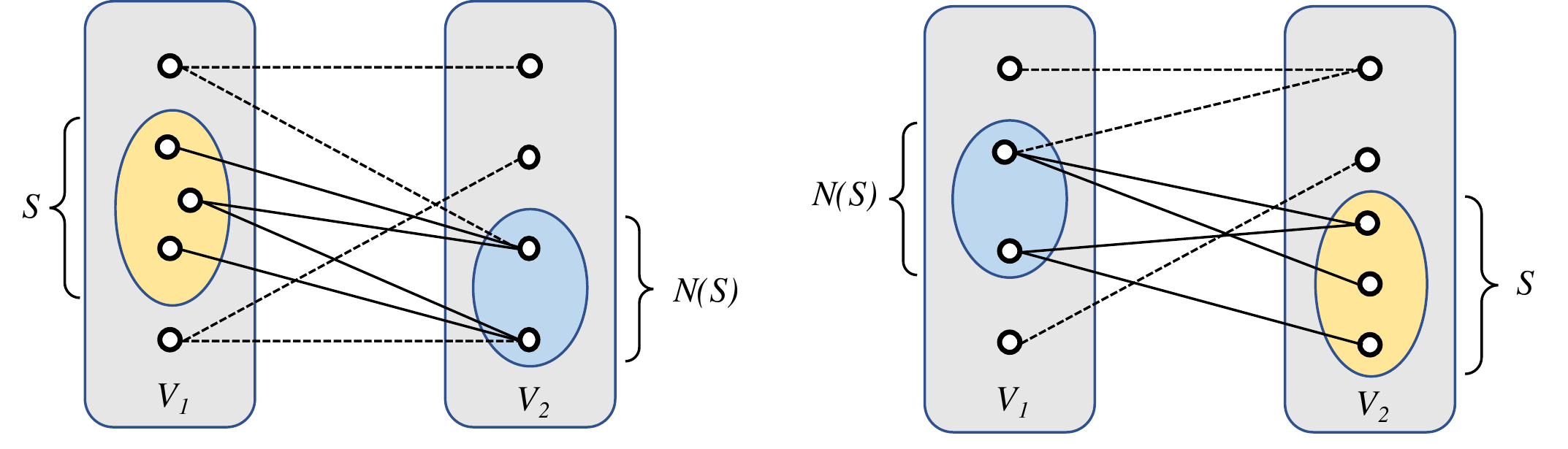}}
		\vskip -0.0in
		\caption{Example of structure $S\in V_1$, $N(S)\in V_2$ and $S\in V_2$, $N(S)\in V_1$ satisfying condition 1,2 and 3. One can easily check that there exists no perfect matching in these two examples.}
		\label{fig:example}
	\end{center}
	\vskip -0.3in
\end{figure*}

Before we presenting the results in the case $2 \leq |S| \leq d/2$, we first define the following three events.
\begin{definition}
	Given a set $S\subseteq V_1$ and $|S|=s$, for each vertex $v\in V_2$, define an event $S^s_{0}$ is that $v$ has zero adjacent vertex in $S$, an event $S^s_{1}$ is that $v$ has one adjacent vertex in $S$ and an event $S^s_{\geq2}$ is that $v$ has at least two adjacent vertices in $S$. 
\end{definition}
Then we can upper bound the probability of event $E$ by
\begin{align}\label{eq:mainupperbound1}
\mathbb{P}(E(V_1))&=\mathbb{P}(\text{there exists $S\in V_1$ and $N(S)\in V_2$ such that conditions $1,2$ and $3$ are satisfied})\notag\\
&\overset{(a)}{\leq} \mathbb{P}(\text{there exists an isoloted vertex in $V_1$})+ \sum\limits_{s=2}^{d/2}\binom{d}{s}\binom{d}{s-1}\cdot\mathbb{P}(S^s_{\geq 2})^{s-1}\cdot\mathbb{P}(S^s_{0})^{d-s+1}\notag\\
&=o(1)+ \sum\limits_{s=2}^{d/2}\binom{d}{s}\binom{d}{s-1}\cdot\mathbb{P}(S^s_{\geq 2})^{s-1}\cdot\mathbb{P}(S^s_{0})^{d-s+1}.
\end{align}
The above, step (a) is based on the union bound. Formally, given $|S|=s$ and fixed vertex $v\in V_2$, we can calculate $\mathbb{P}(S^s_{0})$ via the law of total probability.
\begin{align}\label{eq:notconnected}
\mathbb{P}(S^s_{0})&=\sum\limits_{k=1}^{d}\mathbb{P}(S^s_{0}|\deg(v)=k)\cdot \mathbb{P}(\deg(v)=k)\notag\\
&=\sum\limits_{k=1}^{d-s}p_k\binom{d-s}{k}\cdot\binom{d}{k}^{-1}\notag\\
&=\sum\limits_{k=1}^{d-s}p_k\left(1-\frac{k}{d}\right)\left(1-\frac{k}{d-1}\right)\left(1-\frac{k}{d-2}\right)\cdots\left(1-\frac{k}{d-s+1}\right).
\end{align}

Similarly, the probability $\mathbb{P}(S^s_{1})$  is given by the following formula.
\begin{align}\label{eq:degree1ins}
\mathbb{P}(S^s_{1})&=\sum\limits_{k=1}^{d}\mathbb{P}(S^s_{1}|\deg(v)=k)\cdot \mathbb{P}(\deg(v)=k)\notag\\
&=\sum\limits_{k=1}^{d-s+1}p_k\binom{d-s}{k-1}\cdot\binom{s}{1}\cdot\binom{d}{k}^{-1}\notag\\
&=\sum\limits_{k=1}^{d-s+1}p_k\left(1-\frac{k}{d}\right)\left(1-\frac{k}{d-1}\right)\cdots\left(1-\frac{k}{d-s+2}\right)\cdot\frac{sk}{d-s+1}.
\end{align}

Then, the probability $\mathbb{P}(S^s_{\geq 2})$ is given by the following formula.
\begin{equation}\label{eq:atleast2}
\mathbb{P}(S^s_{\geq 2})=1-\mathbb{P}(S^s_{0})-\mathbb{P}(S^s_{1}).
\end{equation}
The rest is to utilize the formula (\ref{eq:notconnected}), (\ref{eq:degree1ins}) and (\ref{eq:atleast2}) to estimate the order of (\ref{eq:mainupperbound1}) under several scenarios.

\textbf{Case 1.2:} We consider $S\subseteq V_1$ and $|S|=\Theta(1)$. 

Based on the result of (\ref{eq:notconnected}), we have
\begin{align}
\mathbb{P}(S^s_{0})&\leq\sum\limits_{k=1}^{d}p_k\left(1-\frac{k}{d}\right)^s=\sum\limits_{k=1}^{d}p_k\sum\limits_{i=0}^{s}\binom{s}{i}(-1)^i\frac{k^i}{d^i}\notag\\
&\overset{(a)}{=}\sum\limits_{k=1}^{d}p_k\left(1-\frac{sk}{d}\right)+\sum\limits_{i=2}^{s}\binom{s}{i}\frac{(-1)^i}{d^i}\sum\limits_{k=1}^{d}k^ip_k\notag\\
&\overset{(b)}{=}1-\Theta\left(\frac{s\tau\ln(d)}{d}\right)+\Theta\left(\frac{1}{d}\right)=1-\Theta\left(\frac{s\tau\ln(d)}{d}\right).\label{eq:upproba0}
\end{align}
The above, step (a) is based on exchanging the order of summation; step (b) utilizes the result of Lemma~\ref{lm:moment} and the fact that $s$ is the constant. Similarly, we have
\begin{align}
\mathbb{P}(S^s_{\geq 2})&\leq 1-\mathbb{P}(S^s_{0})\leq 1-\sum\limits_{k=1}^{d-s}p_k\left(1-\frac{k}{d-s}\right)^s=\Theta\left(\frac{s\tau\ln(d)}{d}\right).\label{eq:upproba1}
\end{align}
Combining the upper bound (\ref{eq:mainupperbound1}) and estimation of $\mathbb{P}(S^s_{0})$ and $\mathbb{P}(S^s_{\geq 2})$, we have
\begin{align}
&\sum\limits_{s=\Theta(1)}\binom{d}{s}\binom{d}{s-1}\cdot\mathbb{P}(S^s_{\geq 2})^{s-1}\cdot\mathbb{P}(S^s_{0})^{n-s+1}\notag\\
\overset{(a)}{\leq}&\sum\limits_{s=\Theta(1)}\frac{e^{2s-1}d^{2s-1}}{s^{2s-1}}\cdot\Theta\left(\frac{s\tau\ln(d)}{d}\right)^{s-1}\cdot\left[1-\Theta\left(\frac{s\tau\ln(d)}{d}\right)\right]^{d-s+1}\notag\\
=&\sum\limits_{s=\Theta(1)}\Theta\left(\frac{\ln^{s-1}(d)}{d^{(\tau-1)s}}\right)=o(1).\label{eq:estimate1.2}
\end{align}
The above, step (a) utilizes the inequality $\binom{d}{s}\leq(ed/s)^s$.

\textbf{Case 1.3:} We consider $S\subseteq V_1$, $|S|=\Omega(1)$ and $|S|=o(d)$ . 

Based on the result of (\ref{eq:notconnected}), we have
\begin{align}
\mathbb{P}(S^s_{0})&\leq\sum\limits_{k=1}^{d}p_k\left(1-\frac{k}{d}\right)^s\notag\\
&\overset{(a)}{=}\sum\limits_{ks=o(d)}p_k\left(1-\frac{ks}{d}\right)+\sum\limits_{ks=\Theta(d) \text{ or } \Omega(d)}p_k\left(1-\frac{k}{d}\right)^s\notag\\
&\leq\sum\limits_{k=1}^{d/s}p_k\left(1-\frac{ks}{d}\right)+\sum\limits_{ks=\Theta(d) \text{ or } \Omega(d)}p_k\notag\\
&\overset{(b)}{\leq} 1-\frac{\tau (s-1)}{d}-\Theta\left(\frac{\tau s\ln(d/s)}{d}\right)+\Theta\left(\frac{\tau s}{d}\right)\notag\\
&\overset{(c)}{=}1-\Theta\left(\frac{\tau s\ln(d/cs)}{d}\right).
\end{align}
The above, step (a) is based on summation over different orders of $k$. In particular, when $ks=o(d)$ and $s=\Omega(1)$, we have $(1-k/d)^s=$ $\Theta(e^{-ks/d})=1-\Theta(ks/d)$. Step (b) utilizes the partial sum formula $1+1/2+\cdots+s/d=\Theta(\ln(d/s))$. The parameter $c$ of step (c) is a constant. Similarly, we have
\begin{align}
\mathbb{P}(S^s_{\geq 2})&\leq 1-\mathbb{P}(S^s_{0})\notag\\
&\overset{(a)}{\leq} 1-\left[\sum\limits_{ks=o(d)}p_k\left(1-\frac{ks}{d-s}\right)+\sum\limits_{ks=\Theta(d),k\leq d/s-1}p_k e^{-\frac{ks}{d-s}}\right]\notag\\
&\overset{(b)}{\leq}1-\sum\limits_{k=1}^{d/s-1}p_k\left(1-\frac{ks}{d-s}\right)\notag\\
&\overset{(c)}{=}\Theta\left(\frac{\tau s\ln(d/c's)}{d}\right).
\end{align}
The above, step (a) is based on summation over different orders of $k$, and abandon the terms when $k\geq d/s$. In particular, when $ks=\Theta(d)$ and $s=\Omega(1)$, we have $(1-k/(d-s))^s=$ $\Theta(e^{-ks/(d-s)})$. The step (b) utilizes the inequality $e^{-x}\geq 1-x, \forall x\geq 0$. The parameter $c'$ of step (c) is a constant.  Combining the upper bound (\ref{eq:mainupperbound1}) and estimation of $\mathbb{P}(S^s_{0})$ and $\mathbb{P}(S^s_{\geq 2})$, we have
\begin{align}
&\sum\limits_{s=\Omega(1),s=o(d)}\binom{d}{s}\binom{d}{s-1}\cdot\mathbb{P}(S^s_{\geq 2})^{s-1}\cdot\mathbb{P}(S^s_{0})^{d-s+1}\notag\\
\leq&\sum\limits_{s=\Omega(1),s=o(d)}\frac{e^{2s-1}d^{2s-1}}{s^{2s-1}}\cdot\Theta\left(\frac{s\tau\ln(d/c's)}{d}\right)^{s-1}\cdot\left[1-\Theta\left(\frac{s\tau\ln(d/cs)}{d}\right)\right]^{d-s+1}\notag\\
=&\sum\limits_{s=\Omega(1),s=o(d)} \Theta\left(\left(\frac{s}{d}\right)^{(\tau-1)s} c^{\tau s}\tau^{s-1}e^{2s-1}\ln^{s-1}(d/c's)\right)\notag\\
=&\sum\limits_{s=\Omega(1),s=o(d)} \Theta\left(\frac{s\ln^{1.06}(d/s)}{d}\right)^{(\tau-1)s}=o(1).
\label{eq:estimate1.3}
\end{align}

\textbf{Case 1.4:} We consider $S\subseteq V_1$ and $|S|=\Theta(d)=cd$ . 

Based on the result of (\ref{eq:notconnected}) and Stirling's approximation, we have
\begin{align}\label{eq:approxnotconnected}
\mathbb{P}(S^s_{0})&=\sum\limits_{k=1}^{d-s}p_k\left(1-\frac{k}{d}\right)^{d-k+\frac{1}{2}}\left(1+\frac{k}{d-s-k}\right)^{d-s-k+\frac{1}{2}}(1-c)^k\notag\\
&\overset{(a)}{=}\sum\limits_{k=o(d)}p_k(1-c)^k+\sum\limits_{k=\Theta(d),k\leq d-s}o\left((1-c)^k\right)\notag\\
&=p_1(1-c)+p_2(1-c)^2+\tau\sum\limits_{k\geq 3,k=o(d)}\frac{(1-c)^k}{k(k-1)}+o(1) \notag\\
&\overset{(b)}{=}p_1(1-c)+p_2(1-c)^2+\tau\left[\frac{1}{2}(1-c^2)+c\ln(c)\right]\triangleq f_0(c).
\end{align}
The above, step (a) is based on summation over different orders of $k$. The step (b) is based on the following partial sum formula.
\begin{align}\label{eq:partialsum}
\sum\limits_{k=3}^{q}\frac{(1-c)^k}{k(k-1)}&=\frac{1}{2q}\left[2c(1-c)q(1-c)^q\Phi(1-c,1,q+1)-2(1-c)^{q+1}+q(1-c^2)+2cq\ln(c)\right].
\end{align}
where the function $\Phi(1-c,1,q+1)$ is the Lerch Transcendent, defined as
\begin{equation}
\Phi(1-c,1,q+1)=\sum\limits_{k=0}^{\infty}\frac{c^k}{k+q+1}.
\end{equation}
Let $q=\Omega(1)$, we arrive at the step (b). Similarly, utilizing the result of (\ref{eq:degree1ins}), we have
\begin{align}
\mathbb{P}(S^s_{1})&=\frac{c}{1-c}\sum\limits_{k=o(d)}p_kk(1-c)^k\notag\\
&=p_1c+2p_2c(1-c)+\tau c(c-1-\ln(c))\triangleq f_1(c).
\end{align}

Therefore, utilizing the upper bound (\ref{eq:mainupperbound1}), we arrive at
\begin{align}
&\sum\limits_{s=\Theta(d),s\leq d/2}\binom{d}{s}\binom{d}{s-1}\cdot\mathbb{P}(S^s_{\geq 2})^{s-1}\cdot\mathbb{P}(S^s_{0})^{n-s+1}\notag\\
\leq&\sum\limits_{s=cd,s\leq d/2}\left[\left(\frac{1}{c}\right)^{2c}\left(\frac{1}{1-c}\right)^{2(1-c)}\left[1-f_0(c)-f_1(c)\right]^{c}\left[f_0(c)\right]^{1-c}\right]^d\notag\\
=&\sum\limits_{s=cd,s\leq d/2}(1-\Theta(1))^d=o(1).\label{eq:estimate1.4}
\end{align}
Therefore, combining the results in the above four cases, we conclude that $\mathbb{P}(E(V_1))=o(1)$.

\textbf{Case 2:} We consider that $S\subseteq V_2$. We relax the condition 2 in Lemma~\ref{lm:halltheorem} to the following condition.

$\quad 2'.$ \emph{For each vertex $t\in S$, there exists at least one adjacent vertex in $N(S)$.}

Define an event $E(V_2)$ is that there exists a set $S\subseteq V_2$ satisfying condition 1, 2, 3, and an event $E'$ is that there exists a set $S$ satisfying above condition $1, 2'$ and $3$.  One can easily show that the event $E(V_2)$ implies the event $E'$ and $\mathbb{P}(E(V_2))\leq \mathbb{P}(E')$. Then we aim to show that the probability of event $E'$ is $o(1)$.
\begin{definition}
	Given a set $S\subseteq V_2$ and $|S|=s$, for each vertex $v\in V_2$, define an event $N^s_{\geq1}$ is that $v$ has at least one adjacent vertex in $N(S)$ and $v$ does not connect to any vertices in $V_1/N(S)$.
\end{definition}
Then we can upper bound the probability of event $E'$ by
\begin{align}\label{eq:mainupperbound2}
\mathbb{P}(E')&=\mathbb{P}(\text{there exists $S\in V_2$ and $N(S)\in V_1$ such that condition $1,2'$ and $3$ are satisfied})\notag\\
&\overset{(a)}{\leq}\sum\limits_{s=2}^{d/2}\binom{d}{s}\binom{d}{s-1}\cdot\mathbb{P}(N^s_{\geq 1})^{s}\notag\\
&\leq \frac{e^{2s-1}d^{2s-1}}{s^{2s-1}}\cdot\mathbb{P}(N^s_{\geq 1})^{s}.
\end{align}
The above, step (a) is based on the fact that any vertices in set $V_2$ has degree at least one according to the definition of the Wave Soliton distribution. Given $|S|=s$ and fixed vertex $v\in S$, we can calculate $\mathbb{P}(N^s_{\geq 1})$ via the law of total probability.
\begin{align}\label{eq:nslarge2}
\mathbb{P}(N^s_{\geq 1})&=\sum\limits_{k=1}^{d}\mathbb{P}(N^s_{\geq 1}|\deg(v)=k)\cdot \mathbb{P}(\deg(v)=k)\notag\\
&=\sum\limits_{k=1}^{s-1}p_k\binom{s-1}{k}\cdot\binom{d}{k}^{-1}\notag\\
&=\sum\limits_{k=1}^{s-1}p_k\frac{s-1}{d}\cdot\frac{s-2}{d-1}\cdot\frac{s-3}{d-2}\cdots\cdot\frac{s-k}{d-k+1}\notag\\
&\leq \sum\limits_{k=1}^{s-1}p_k \frac{s^k}{d^k}.
\end{align}

\textbf{Case 2.1:} We consider $S\subseteq V_2$ and $|S|=\Theta(1)$.

Based on the result of (\ref{eq:nslarge2}), we have
\begin{align}
\mathbb{P}(N^s_{\geq 1})&\leq \frac{\tau s}{d^2} + p_2 \frac{s^2}{d^2}+\sum\limits_{k=3}^{s-1}p_k \frac{s^k}{d^k}\leq  \frac{\tau s}{d^2} +p_2\frac{s^2}{d^2}+\frac{\tau s^3}{d^3}\left(\frac{1}{2}-\frac{1}{s-1}\right)<\frac{s^2}{d^2}\left[\frac{1}{36}+\frac{\tau}{s}+\frac{\tau s}{d}\left(\frac{1}{2}-\frac{1}{s-1}\right)\right]
\end{align}
Then we have
\begin{align}
\sum\limits_{s=\Theta(1)}\frac{e^{2s-1}d^{2s-1}}{s^{2s-1}}\cdot\mathbb{P}(N^s_{\geq 1})^{s}=\Theta\left(\frac{1}{d}\right).
\end{align}

\textbf{Case 2.2:} We consider $S\subseteq V_2$, $|S|=\Omega(1)$ and $|S|=o(d)$.

Similarly, using the result in Case 2.1 and  upper bound (\ref{eq:mainupperbound2}), we arrive
\begin{align}
&\sum\limits_{s=o(d)}\binom{d}{s}\binom{d}{s-1}\cdot\mathbb{P}(N^s_{\geq 1})^{s}\leq\sum\limits_{s=o(d)}\frac{se^{2s-1}}{d}\left[\frac{1}{36}+\frac{\tau}{s}+\frac{\tau s}{d}\left(\frac{1}{2}-\frac{1}{s-1}\right)\right]^s\overset{(a)}{=}o(1).
\end{align}
The above, step (a) is based on the fact that $1/36+o(1)<e^{-2}$.

\textbf{Case 2.3:} We consider $S\subseteq V_2$ and $|S|=\Theta(d)=cd$.
Based on the result of (\ref{eq:nslarge2}), we have
\begin{align}
\mathbb{P}(N^s_{\geq 1})&\leq\sum\limits_{k=1}^{s-1}p_k c^k\overset{(a)}{\leq}p_1c+p_2c^2+\tau\left(c-\frac{c^2}{2}+(1-c)\ln(1-c)\right)\triangleq f_2(c).
\end{align}
The above, step (a) utilizes the partial sum formula (\ref{eq:partialsum}). Using the upper bound (\ref{eq:mainupperbound2}), we arrive
\begin{align}
&\sum\limits_{s=\Theta(d)}\binom{d}{s}\binom{d}{s-1}\cdot\mathbb{P}(N^s_{\geq 2})^{s}=\left[\left(\frac{1}{c}\right)^{2c}\left(\frac{1}{1-c}\right)^{2(1-c)}[f_2(c)]^c\right]^d=\sum\limits_{s=\Theta(d)}(1-\Theta(1))^d=o(1).
\end{align}

Combining the results in the above three cases, we have $\mathbb{P}(E')=o(1)$. Therefore, the theorem follows.

\subsection{Proof of Lemma~\ref{lm:decode}}

Consider a random bipartite graph generated by degree distribution $P_w$ of the nodes in left partition $V_2$, define a left edge degree distribution $\lambda(x)=\sum_{k}\lambda_kx^{k-1}$ and a right edge degree distribution $\rho(x)=\sum_{k}\rho_kx^{k-1}$, where $\lambda_k$ ($\rho_k$) is the fraction of edges adjacent to a node of degree $k$ in the left partition $V_1$ (right partition $V_2$).  The existing analysis in~\cite{luby2001efficient} provides a quantitative condition regarding the recovery threshold in terms of $\lambda(x)$ and $\rho(x)$.
\begin{lemma}\label{lm:luby}
	Let a random bipartite graph be chosen at random with left edge degree distribution $\lambda(x)$ and right edge degree distribution $\rho(x)$, if
	\begin{equation}
	\lambda(1-\rho(1-x)) < x, x\in[\delta,1],
	\end{equation}
	then the probability that peeling decoding process cannot recover $\delta d$ or more of original blocks is upper bounded by $e^{-cd}$ for some constant $c$.
\end{lemma}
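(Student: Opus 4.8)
The plan is to follow the density-evolution / And--Or tree argument of Luby \emph{et al.}~\cite{luby2001efficient,luby1998tornado}, of which the stated lemma is the abstract form, and to equip it with an exponential concentration bound. Model the peeling decoder as an edge-removal process on the random bipartite graph $G$: the right nodes ($V_2$) are the collected coded blocks, the left nodes ($V_1$) are the $d$ original blocks $A_i^{\intercal}B_j$, a right node of current degree one (a ripple) recovers its unique remaining left neighbour, and that left node together with all its incident edges is deleted, possibly creating new ripples. Introduce the idealized recursion $y_0=1$, $y_{i+1}=\lambda\bigl(1-\rho(1-y_i)\bigr)$, where $y_i$ is interpreted as the probability that, following a uniformly random edge out of a left node, all the \emph{other} edges of that node still carry an ``unrecovered'' message after $i$ rounds, and $1-\rho(1-y_i)$ is the probability that a given right-incident edge can release its left endpoint. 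The hypothesis $\lambda(1-\rho(1-x))<x$ on $[\delta,1]$ says this map has no fixed point in $[\delta,1]$, so the decreasing sequence $\{y_i\}$ cannot converge inside $[\delta,1]$ and hence drops below $\delta$ after some \emph{constant} number $m=m(\lambda,\rho,\delta)$ of rounds; in fact it drops below $\delta-\varepsilon$ for a fixed $\varepsilon>0$.

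First I would make this idealized analysis rigorous over the fixed, constant horizon $m$. A sparse random bipartite graph contains a cycle within the radius-$2m$ ball of a uniformly chosen left node only with probability $O(1/d)$, so the distribution of that bounded-depth neighbourhood converges to the Galton--Watson tree with offspring laws read off from $\lambda$ and $\rho$. Running $m$ rounds of peeling inspects only this radius-$2m$ ball, so the probability that a given left node is still unrecovered after $m$ rounds equals its value in the tree model up to an additive $O(1/d)$. Summing over the $d$ left nodes, the expected number $\mathbb{E}[U_m]$ still unrecovered after round $m$ is at most $(\delta-\varepsilon)d+o(d)$, hence at most $(\delta-\varepsilon/2)d$ for all large $d$.

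The remaining, and principal, step is concentration: I would show $U_m$ is exponentially concentrated around its mean. Reveal the randomness of $G$ one right node at a time (its degree and its choice of left neighbours), forming a Doob martingale $Z_0=\mathbb{E}[U_m],\,Z_1,\dots,Z_{|V_2|}=U_m$. Re-exposing one right node changes whether a left node is peeled within $m$ rounds only if that left node lies within graph-distance $2m$ of it, so the per-step difference is bounded by a constant $L=L(m,\Delta)$ depending on $m$ and a degree bound $\Delta$; Azuma's inequality then gives $\Pr[U_m\ge\delta d]\le\Pr[U_m-\mathbb{E}U_m\ge(\varepsilon/2)d]\le\exp\bigl(-\Theta(\varepsilon^2 d/L^2)\bigr)=e^{-cd}$. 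Thus with probability at least $1-e^{-cd}$ the peeling process leaves at most $\delta d$ blocks unrecovered, which is (at least as strong as) the claimed statement; no ``endgame'' expansion argument is needed because the lemma only requires driving the residual set below $\delta d$, not all the way to zero.

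The main obstacle is that $L$ grows with the maximum degree $\Delta$, while the distributions of interest here (the Wave Soliton distribution) have $\Delta=\Theta(d)$, so a single high-degree right node can in principle influence a super-constant number of left nodes. The fix I would use is the standard truncation: couple $G$ with the graph $G'$ in which right-degrees are capped at $\Theta(\log d)$, argue the two agree outside an event of probability $o(1)$ and that the capped edges add only an $o(d)$ term to $U_m$, and carry out the martingale argument on $G'$, where $L=\mathrm{polylog}(d)$ still yields $\exp\bigl(-d/\mathrm{polylog}(d)\bigr)$; alternatively one replaces the vertex-exposure martingale by the edge-exposure / stopping-time martingale of~\cite{luby2001efficient}, whose increments are $O(1)$ by construction, obtaining the sharp $e^{-cd}$ directly. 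Either route is routine once the tree approximation and the constant horizon $m$ are in hand, and I expect the truncation bookkeeping to be the only genuinely delicate point.
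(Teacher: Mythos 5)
You should first note that the paper does not prove this lemma at all: it is imported verbatim from Luby et al.\ \cite{luby2001efficient} with the remark that it is ``tailored from applying a martingale argument to the peeling decoding process,'' so there is no in-paper proof to match against. Your reconstruction is a legitimate and essentially standard alternative route: where Luby et al.\ analyze the \emph{sequential} peeling process directly, tracking the residual degree distribution and showing via an edge-exposure/stopping-time martingale that the ripple stays nonempty until only a $\delta$-fraction remains, you analyze a constant number $m$ of \emph{parallel} peeling rounds via density evolution on the local tree limit and then apply Azuma to the count of unrecovered left nodes. The monotonicity step you implicitly use (run-to-completion peeling recovers at least as much as $m$ parallel rounds) is valid, and the reduction of the hypothesis $\lambda(1-\rho(1-x))<x$ on the compact set $[\delta,1]$ to a uniform gap, hence a constant horizon $m$, is correct. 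You also correctly identify the one genuinely delicate point: for the degree distributions this lemma is applied to (Wave Soliton, with $p_k\propto 1/k(k-1)$, so $\mathbb{E}[X^2]=\Theta(d)$ and maximum degree $\Theta(d)$), both the tree approximation and the bounded-differences constant break down without truncation, and the truncation route only yields $\exp(-d/\mathrm{polylog}(d))$ rather than the stated $e^{-cd}$. Your fallback --- reverting to the Luby et al.\ martingale whose increments are $O(1)$ by construction --- is exactly the right fix, and is in fact what the cited source does; so your proposal buys a self-contained, more transparent argument at the cost of a slightly weaker exponent unless that fallback is invoked, while the paper buys brevity by outsourcing the whole thing.
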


We first derive the edge degree distributions $\lambda(x)=\sum_{k}\lambda_kx^{k-1}$ and $\rho(x)=\sum_{k}\rho_kx^{k-1}$ via the degree distribution $\Omega_w(x)$. Suppose that the recovery threshold is $K$. The total number of edges is $K\Omega_w'(1)$ and the total number of edges that is adjacent to a right node of degree $k$ is $Kkp_k$. Then, we have $\rho_k=kp_k/\Omega_w'(1)$ and 
\begin{equation}
\rho(x)=\Omega_w'(x)/\Omega_w'(1).
\end{equation}

Fix a node $v_i\in V_1$, the probability that node $v_i$ is a neighbor of node $v_j\in V_2$ is given by 
\begin{equation*}
\sum\limits_{k=1}^{d}p_k\binom{d-1}{k-1}\binom{d}{k}^{-1}=\frac{1}{d}\sum\limits_{k=1}^{d}kp_k=\frac{\Omega_w'(1)}{d}.
\end{equation*}
Since $|V_2|=K$, the probability that node $v_i$ is the neighbor of exactly $l$ nodes in $V_2$ is $\binom{K}{l}(\Omega_w'(1)/d)^l(1-\Omega_w'(1)/d)^{K-l}$ and corresponding probability generating function is
\begin{equation*}
\sum\limits_{l=1}^K\binom{K}{l}\left(\frac{\Omega_w'(1)}{d}\right)^l\left(1-\frac{\Omega_w'(1)}{d}\right)^{K-l}x^l=\left[1-\frac{\Omega_w'(1)(1-x)}{d}\right]^K.
\end{equation*}
Then we can obtain $\lambda(x)$ as
\begin{equation}
\lambda(x)=\left[1-\frac{\Omega_w'(1)(1-x)}{d}\right]^{K-1}.
\end{equation}
Further, we have
\begin{equation}
\lambda(1-\rho(1-x))=\left[1-\frac{\Omega_w'(1-x)}{d}\right]^{K-1}.
\end{equation}
Combining these results with Lemma~\ref{lm:luby}, let $\delta=b/mn$, the lemma follows.

\subsection{Proof of Theorem~\ref{thm:decode}}

Suppose that $K=cd+1$, one basic fact is that
\begin{equation}
\lambda(1-\rho(1-x))=\left[1-\frac{\Omega_w'(1-x)}{d}\right]^{K-1}\leq e^{-c\Omega_w'(1-x)}
\end{equation}
Based on the results of Lemma 3, the rest is to show that $e^{-c\Omega_w'(x)}\leq 1-x$ for $x\in[0,1-b/d]$. Based on the definition of our Wave Soliton distribution, we have
\begin{align}
\Omega_w'(x)&=\frac{\tau}{d}+\frac{\tau x}{35}+\tau\sum\limits_{k=2}^{d-1}\frac{x^{k}}{k}\notag\\
&=\frac{\tau}{d}-\frac{ 34\tau x}{35}-\tau\ln(1-x)-\tau\sum\limits_{k=d}^{\infty}\frac{x^{k}}{k}\overset{(a)}{\geq}\frac{\tau}{d}-\frac{ 34\tau x}{35}-\tau\ln(1-x)-\tau x^{10}.
\end{align}
The above step (a) is utilizing the fact that $x^{10}\geq \sum_{k=d}^{\infty}\frac{x^{k}}{k}$ for $x\in[0,1-b/d]$. It remains to show that there exists a constant $c$ such that 
\begin{equation}
-c\left[\frac{\tau}{d}-\frac{ 34\tau x}{35}-\tau\ln(1-x)-\tau x^{10}\right]\leq \ln(1-x), \text{ for }  x\in[0,1-b/d].
\end{equation}
which is verified easily.

\subsection{Optimal Design of Sparse Code}

We focus on determining the optimal degree distribution based on our previous analysis. Formally, we can formulate the following optimization problem.
\begin{align}
&\min\quad \sum\limits_{k=1}^{mn} kp_k\label{eq:optimization}\\
& \begin{array}{r@{\quad}l@{}l@{\quad}l}
s.t. & \mathbb{P}(M\text{ is full rank})> p_c,\notag\\
&\left[1-\frac{\Omega_w'(x)}{mn}\right]^{mn+c}\leq 1-x-c_0\sqrt{\frac{1-x}{mn}}, x\in\left[0,1-b/mn\right], \notag\\
&[p_k]\in \Delta_{mn},\notag
\end{array}
\end{align}

Here the coefficient matrix $M$ has row dimension $mn$ and column dimension $mn+c$. Due to the hardness of estimating the probability that $M$ is full rank, we relax this condition to the following condition
\begin{equation}
\mathbb{P}(G(V_1,V_2,P)\text{ contains a perfect matching})> p_m,
\end{equation}
where $|V_1|=|V_2|=mn$ and $p_m$ is a given threshold. The basic intuition behind this relaxation comes from the analysis of Section 4.1: the probability that there exists a perfect matching in the balanced random bipartite graph $G(V_1,V_2,P)$ provides a lower bound of the probability that the coefficient matrix $M$ is of full rank. Based on this relaxation, the rest is to estimate the probability that $G(V_1,V_2,P)$ contains a perfect matching. Instead of estimating the lower bound of such a probability as in the proof of Theorem 2, here we provide an exact formula that is a function of the degree distribution $P$. 

In the sequel, we denote $mn$ by $d$. Suppose the vertex in partition $V_2$ is denoted by $\{v_1,v_2,\ldots,v_d\}$. Define a degree distribution $P^{(s)}=[p_0^{(s)}, p_1^{(s)}, p_2^{(s)}, \ldots, p_d^{(s)}]$, where $p_k^{(s)}$ is the probability that any vertex $v\in V_2$ has exact $k$ adjacent vertices in a given set $S\subseteq V_1$ with $|S|=s$. For example, we have $P^{(d)}=P$, where $P$ is the original degree distribution. Let $E(d,V_1,P)$ be the event that $G(V_1,V_2,P)$ contains a perfect matching and $|V_1|=d$. In order to calculate $\mathbb{P}(E(d,V_1,P))$, we condition this probability on that there exists $v_1^n\in V_1$  matching with  $v_1\in V_2$. Then, we have
\begin{align}
\mathbb{P}(E(d,V_1,P))=&\mathbb{P}(E(d,V_1,P)|\exists v_1^n\in V_1 \text{ matching with } v_1)\mathbb \cdot \mathbb{P}(\exists v_1^n\in V_1 \text{ matching with } v_1)\notag\\
=&\mathbb{P}(E(d-1,V_1\backslash \{v_1^n\}, P^{(d-1)}))\mathbb \cdot \left(1 - p^{(d)}_0\right)\notag\\
=&\left(1 - p^{(d)}_0\right)\cdot \mathbb{P}(E(d-1,V_1\backslash \{v_1^n\},P^{(d-1)})|\exists v_2^n\in V_1\backslash \{v_1^n\} \text{ matching with } v_2)\notag\\
&\qquad\qquad\qquad\qquad\qquad\qquad\qquad\qquad\qquad\qquad\mathbb{P}(\exists v_2^n\in V_1\backslash \{v_1^n\} \text{ matching with } v_2)\notag\\
=&\mathbb{P}(E(d-2,V_1\backslash \{v_1^n, v_2^n\}, P^{(d-2)}))\left(1 - p^{(d-1)}_0\right) \cdot \left(1 - p^{(d)}_0\right)\notag\\
=&\cdots\notag\\
=&\left(1 - p^{(1)}_0\right)\left(1 - p^{(2)}_0\right)\cdots\left(1 - p^{(d-1)}_0\right) \left(1 - p^{(d)}_0\right).
\end{align}
The rest is to estimate the degree evolution $P^{(s)}$. Similarly, we have the following recursive formula to calculate the degree evolution: for all $ 0\leq k\leq s, 1\leq s\leq d-1.$
\begin{align}
p^{(s)}_k&= p^{(s+1)}_k\binom{s}{k}\binom{s+1}{k}^{-1} + p^{(s+1)}_{k+1}\binom{s}{k}\binom{s+1}{k+1}^{-1}\notag\\
&=p^{(s+1)}_k\left(1-\frac{k}{s+1}\right) + p^{(s+1)}_{k+1}\frac{k+1}{s+1}.
\end{align}

Utilizing the fact that $P^{(d)}=P$, we can get the exact formula of $P^{(s)}$, $1\leq s\leq d-1$, and the formula of the probability that $G(V_1,V_2,P)$ contains a
 perfect matching.
 
 \begin{table*}[t]
 	\caption{Optimizaed Degree Distribution for Various $mn$ (Numbers in brackets are Results from Robust Soliton Distribution.)}
 	\label{tab:degree}
 	\begin{center}
 		\begin{tabular}{|c|c|c|c|c|c|c|c|c|c|c|}
 			\hline
 			$mn$ & $p_1$ & $p_2$  & $p_3$ & $p_4$ &  $p_5$ &$p_6$  & recovery threshold  & average degree & rooting step \\
 			\hline
 			6 & 0.0217 & 0.9390& 0.0393 & 0.0 & 0.0 & 0.0 & 7.54 (7.61) & 2.01 (2.04) & 0.84 (0.47)\\
 			\hline
 			9 & 0.0291  & 0.7243 & 0.2466 & 0.0& 0.0 & 0.0 & 11.81 (12.15) & 2.21 (2.20) & 0.90 (0.56)\\
 			\hline
 			12 & 0.0598 & 0.1639 & 0.7056 & 0.0707 & 0.0 &  0.0&14.19 (14.47) & 2.78 (2.78) & 1.47 (1.03)\\
 			\hline
 			16 & 0.0264 & 0.3724 & 0.1960 & 0.4052 &  0.0& 0.0 &19.11 (19.83) & 2.98 (2.91) & 1.68 (1.08)\\
 			\hline
 			25 & 0.0221 & 0.4725& 0.1501 &  0.0 & 0.0 &  0.3553 &  28.71 (29.12) & 3.54 (3.55) & 2.35 (2.12)\\
 			\hline
 		\end{tabular}
 	\end{center}
 	\vskip -0.2in
 \end{table*} 

TABLE~\ref{tab:degree} shows several optimized degree distributions we have found using model (\ref{eq:optimization}) with specific choice of parameters $c,c_0, b$ and $p_c$. We also include the several performance results under above distribution and the Robust Soliton distribution (RSD). In traditional RSD, the degree $2$ always have the highest probability mass, i.e., $p_2\approx 0.5$. It is interesting to note that our optimized distribution has a different shape, which depends on $mn$ and choices of parameters. We can observe that, under the same average degree, the optimized distribution has a lower recovery threshold and larger number of rooting steps compared to the RSD. Another observation in solving the optimization problem is that, when the parameter $p_m$ is increased, the recovery threshold of proposed sparse code will be decreased, and the average degree and the number of rooting steps will be increased. 

\end{document}